\newtheorem{remark}{Remark}
\newtheorem{lemma}{Lemma}
\newtheorem{corollary}{Corollary}
\newtheorem{assump}{Assumption}
\title{Beyond Diagonal Reconfigurable Intelligent Surfaces: A Multi-Sector Mode Enabling Highly Directional Full-Space Wireless Coverage}
\author{Hongyu Li,~\IEEEmembership{Student Member,~IEEE}, Shanpu Shen,~\IEEEmembership{Member,~IEEE}, and Bruno Clerckx,~\IEEEmembership{Fellow,~IEEE}
\thanks{Manuscript received; \textit{(Corresponding author: Shanpu Shen).}}
\thanks{H. Li is with the Department of Electrical and Electronic Engineering, Imperial College London, London SW7 2AZ, U.K. (email: c.li21@imperial.ac.uk).}\\
\thanks{S. Shen is with the Department of Electronic and Computer Engineering, The Hong Kong University of Science and Technology, Clear Water Bay, Kowloon, Hong Kong (email: sshenaa@connect.ust.hk).}
\thanks{B. Clerckx is with the Department of Electrical and Electronic Engineering, Imperial College London, London SW7 2AZ, U.K. and with Silicon Austria Labs (SAL), Graz A-8010, Austria (email: b.clerckx@imperial.ac.uk; bruno.clerckx@silicon-austria.com).}
}
\begin{document}

\maketitle
\thispagestyle{empty}
\begin{abstract}

Reconfigurable intelligent surface (RIS) has gained much traction due to its potential to manipulate the propagation environment via nearly-passive reconfigurable elements. 
In our previous work, we have analyzed and proposed a beyond diagonal RIS (BD-RIS) model, which is not limited to traditional diagonal phase shift matrices, to unify different RIS modes/architectures.  
In this paper, we create a new branch of BD-RIS supporting a multi-sector mode.
A multi-sector BD-RIS is modeled as multiple antennas connected to a multi-port group-connected reconfigurable impedance network. More specifically, antennas are divided into $L$ ($L \ge 2$) sectors and arranged as a polygon prism with each sector covering $1/L$ space.  
Different from the recently introduced concept of intelligent omni-surface (or simultaneously transmitting and reflecting RIS), the multi-sector BD-RIS not only achieves a full-space coverage, but also has significant performance gains thanks to the highly directional beam of each sector. 
We derive the constraint of the multi-sector BD-RIS and the corresponding channel model taking into account the relationship between antenna beamwidth and gain. 
With the proposed model, we first derive the scaling law of the received signal power for a multi-sector BD-RIS-assisted single-user system. 
We then propose efficient beamforming design algorithms to maximize the sum-rate of the multi-sector BD-RIS-assisted multiuser system.  
Simulation results verify the effectiveness of the proposed design and demonstrate the performance enhancement of the proposed multi-sector BD-RIS.
\end{abstract}

\begin{IEEEkeywords}
    Beyond diagonal reconfigurable intelligent surface (BD-RIS), full-space coverage, highly directional beam, multi-sector mode.
\end{IEEEkeywords}

\section{Introduction}

The forthcoming next-generation 6G is expected to address 5G drawbacks, improve existing techniques, and lead new technological trends \cite{tataria20216g}.  
Reconfigurable intelligent surface (RIS) is regarded as one of the promising techniques which will drive the 6G architectural evolution \cite{saad2019vision}. 
RIS, usually consisting of multiple nearly-passive reconfigurable elements with flexibly adjustable amplitudes and/or phases, enables adaptive manipulations of the scattered signals \cite{di2020smart}. 
The ``nearly-passive'' and ``reconfigurable'' characteristics of the RIS show its potential to shape the propagation environment as intended in an easily deployed way with a reduced cost, power consumption, and weight \cite{di2019smart}.

\begin{figure}
    \centering
    \includegraphics[width=0.48\textwidth]{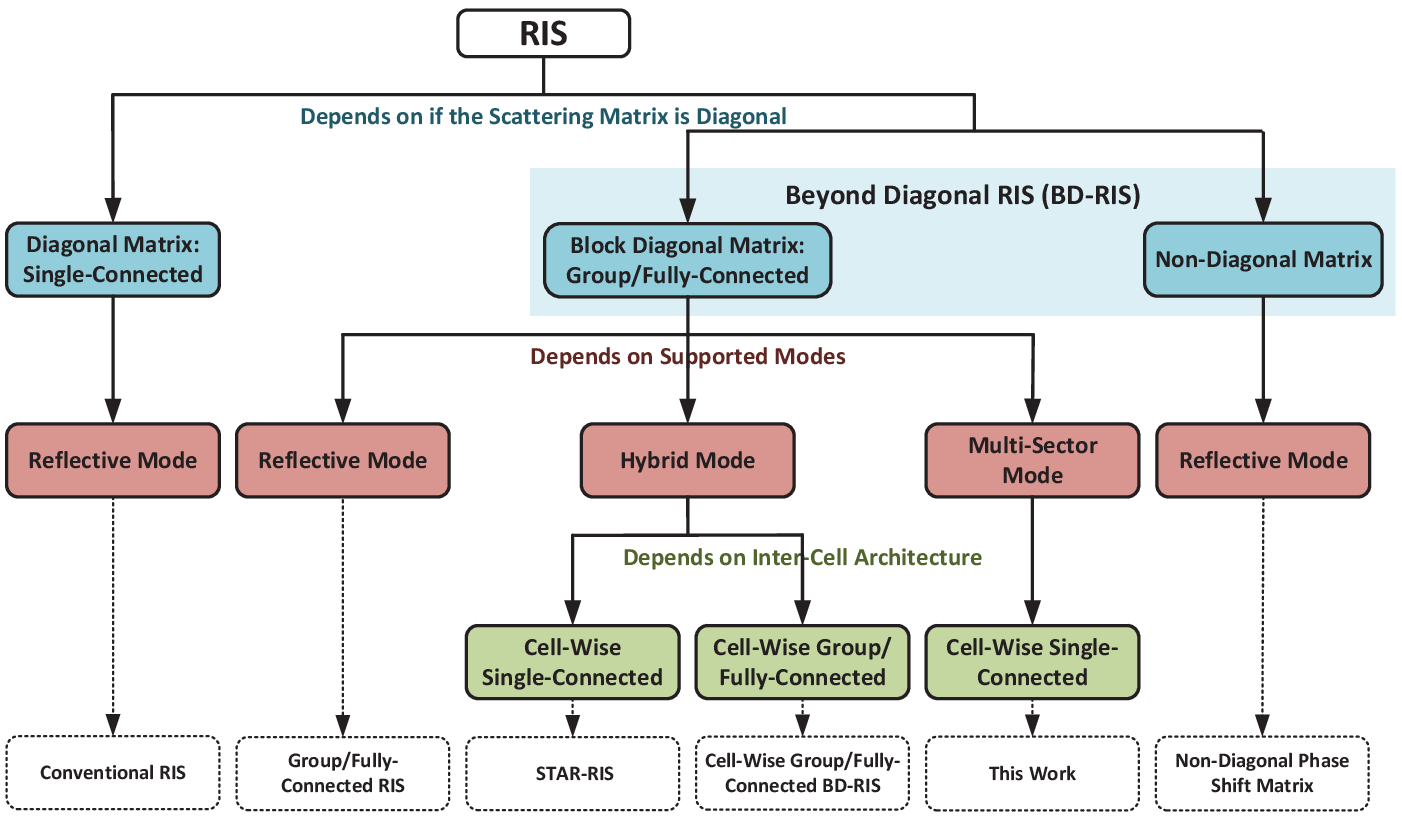}
    \caption{RIS classification tree.}
    \label{fig:RIS_tree}
\end{figure}

Given the appealing potential of RIS, extensive work \cite{guo2020weighted,di2020hybrid,huang2019reconfigurable,wu2019intelligent,wu2019beamforming,zheng2021double,bansal2021rate,feng2022waveform,zhao2021irs,liu2022joint,xu2020resource,gopi2020intelligent,abeywickrama2020intelligent,li2021intelligent,ozdogan2019intelligent,tang2020wireless,shen2021,nerini2021reconfigurable,li2022reconfigurable,xu2021star,zhang2022intelligent,li2022} has been devoted to modeling and optimizations for RIS-aided systems. To clearly classify the existing works, we categorize RIS, according to the mathematical characteristic of its scattering matrix, into diagonal RIS where the scattering matrix is diagonal, and beyond diagonal RIS (BD-RIS) where the scattering matrix is not restricted to be diagonal, as shown in Fig. \ref{fig:RIS_tree}. 
In the following, we will classify the existing studies \cite{guo2020weighted,di2020hybrid,huang2019reconfigurable,wu2019intelligent,wu2019beamforming,zheng2021double,bansal2021rate,feng2022waveform,zhao2021irs,liu2022joint,xu2020resource,gopi2020intelligent,abeywickrama2020intelligent,li2021intelligent,ozdogan2019intelligent,tang2020wireless,shen2021,nerini2021reconfigurable,li2022reconfigurable,xu2021star,zhang2022intelligent,li2022} by mapping them to different branches in Fig. \ref{fig:RIS_tree}.

For conventional RIS \cite{guo2020weighted,di2020hybrid,huang2019reconfigurable,wu2019intelligent,wu2019beamforming,zheng2021double,bansal2021rate,feng2022waveform,zhao2021irs,liu2022joint,xu2020resource,gopi2020intelligent,abeywickrama2020intelligent,li2021intelligent,ozdogan2019intelligent,tang2020wireless}, each element is connected to its own reconfigurable load, but is disconnected from other elements. Therefore, the scattering matrix of conventional RIS \cite{guo2020weighted,di2020hybrid,huang2019reconfigurable,wu2019intelligent,wu2019beamforming,zheng2021double,bansal2021rate,feng2022waveform,zhao2021irs,liu2022joint,xu2020resource,gopi2020intelligent,abeywickrama2020intelligent,li2021intelligent,ozdogan2019intelligent,tang2020wireless} is modeled as a simple \textit{diagonal phase shift matrix}. 
This kind of RIS can only reflect signals from one side of the RIS towards the receiver, and thus only supports the reflective mode. 
With diagonal RIS, beamforming designs focusing on different metrics, such as rate maximization \cite{guo2020weighted,di2020hybrid}, energy-efficiency maximization \cite{huang2019reconfigurable}, power minimization \cite{wu2019intelligent,wu2019beamforming}, and max-min fairness \cite{zheng2021double}, have been developed for various RIS-aided wireless communication systems with continuous phase shifts \cite{guo2020weighted}, \cite{huang2019reconfigurable,wu2019intelligent,wu2019beamforming,zheng2021double} and discrete phase shifts \cite{di2020hybrid,wu2019beamforming}.  
In addition, the integration of RIS and other techniques has also been widely investigated, such as rate-splitting multiple access (RSMA) \cite{bansal2021rate}, wireless power transfer (WPT) \cite{feng2022waveform}, simultaneous wireless information and power transfer (SWIPT) \cite{zhao2021irs}, integrated sensing and communication (ISAC) \cite{liu2022joint}, cognitive radio \cite{xu2020resource}, and index modulation \cite{gopi2020intelligent}. 
The aforementioned works \cite{guo2020weighted,di2020hybrid,huang2019reconfigurable,wu2019intelligent,wu2019beamforming,zheng2021double,bansal2021rate,feng2022waveform,zhao2021irs,liu2022joint,xu2020resource,gopi2020intelligent} assume an ideal RIS phase shift model where its diagonal elements have unit modulus, while only a few results focus on the modeling of RIS accounting for the lumped inductance and capacitance \cite{abeywickrama2020intelligent,li2021intelligent}.
In addition, the near- and far-field free-space path loss models for RIS-aided systems are also theoretically derived \cite{ozdogan2019intelligent,tang2020wireless} and experimentally verified \cite{tang2020wireless}.

Different from conventional RISs with diagonal scattering matrices, BD-RISs have scattering matrices not limited to diagonal ones, which can provide more flexible beam management.   
In existing works, there are two branches belonging to BD-RIS supporting the reflective mode, as shown in Fig. \ref{fig:RIS_tree}.   
On one hand, \cite{shen2021} for the first time models RIS based on the scattering parameter network. This work goes beyond diagonal phase shift models \cite{guo2020weighted,di2020hybrid,huang2019reconfigurable,wu2019intelligent,wu2019beamforming,zheng2021double,bansal2021rate,feng2022waveform,zhao2021irs,liu2022joint,xu2020resource,gopi2020intelligent,abeywickrama2020intelligent,li2021intelligent,ozdogan2019intelligent,tang2020wireless} (with a single-connected architecture), and proposes more general group/fully-connected architectures with block-diagonal matrices.  
Different from the single-connected architecture that only adjusts the phase shifts of incident waves, the group/fully-connected architectures enable both amplitude and phase shift manipulations, therefore providing additional beam control flexibility.  
With the proposed architectures in \cite{shen2021}, codebooks for group/fully-connected RIS with discrete-value impedance networks are designed in \cite{nerini2021reconfigurable}, showing that the codebook size for fully-connected RIS can be much reduced compared to that in conventional RIS. 
On the other hand, another novel RIS architecture \cite{li2022reconfigurable} is proposed under the BD-RIS branch, where the signal impinging on one reconfigurable element can be reflected from another one, yielding a non-diagonal phase shift matrix that enables more flexible beam manipulation.

The limitation of the abovementioned RIS models in \cite{guo2020weighted,di2020hybrid,huang2019reconfigurable,wu2019intelligent,wu2019beamforming,zheng2021double,bansal2021rate,feng2022waveform,zhao2021irs,liu2022joint,xu2020resource,gopi2020intelligent,abeywickrama2020intelligent,li2021intelligent,ozdogan2019intelligent,tang2020wireless,shen2021,nerini2021reconfigurable,li2022reconfigurable} is that incident signals can only be reflected to one side of RIS due to the uniform antenna array arrangement, such that half-space is out of the coverage of RIS.
To address this issue, the concept of simultaneously transmitting and reflecting RIS (STAR-RIS) \cite{xu2021star} or intelligent omni-surface (IOS) \cite{zhang2022intelligent} has been introduced recently. 
STAR-RIS is a dual-functional device enabling signal manipulations of both reflection and transmission, i.e., supporting the hybrid mode as illustrated in Fig. \ref{fig:RIS_tree}, to achieve full-space coverage. 
However, STAR-RIS \cite{xu2021star,zhang2022intelligent} is a special case of back to back placed antenna arrays connected to group-connected reconfigurable impedance network, while different architectures of BD-RIS supporting the hybrid mode have not been investigated.
This motivates our recent work \cite{li2022}, where a general BD-RIS model unifying reflective/transmissive/hybrid modes and single/group/fully-connected architectures is proposed and analyzed. 
Specifically, we first show that STAR-RIS is a specific application of group-connected reconfigurable impedance network with group size equal to 2. Then we go beyond STAR-RIS and propose BD-RIS with hybrid mode and more general group/fully-connected architectures, which achieve better performance than STAR-RIS, as shown in Fig. \ref{fig:RIS_tree}. 

While the proposed BD-RIS \cite{li2022} with hybrid mode and group/fully-connected architectures achieves significant performance improvement, it is achieved at the expense of increasing circuit complexity. This motivates us to explore different architectures of BD-RIS together with different antenna arrangements to achieve fully-space coverage, enhance performance, and simplify the circuit design. Therefore, in this work, we propose a novel multi-sector BD-RIS as shown in Fig. \ref{fig:RIS_tree}. More specifically, we have the following contributions.

\textit{First}, we, for the first time, propose a multi-sector BD-RIS, which is modeled as a polygon as shown in Fig. \ref{fig:assumption}(a). 
The proposed model is general enough to include STAR-RIS/IOS as a special instance and achieve full-space coverage, but provides significant performance enhancements compared to IOS/STAR-RIS and conventional RIS, benefiting from the group-connected reconfigurable impedance network and antenna array arrangements.
With full-space coverage, highly directional beams, and adaptive structures (based on different numbers of sectors), multi-sector BD-RIS can be flexibly deployed into various situations and is particularly favorable to millimeter wave/Terahertz scenarios and cell-free networks.

\textit{Second}, we derive constraints of multi-sector BD-RIS when its reconfigurable elements have either continuous amplitudes/phase shifts or discrete values selected from finite codebooks. We also derive the corresponding channel model by taking into account the impact of beamwidth and antenna gain.

\textit{Third}, with the proposed model, we derive the scaling law of the received signal power as a function of the number of sectors for a multi-sector BD-RIS-assisted single user single-input single-output (SU-SISO) system to show the benefits of using multi-sector BD-RIS.

\textit{Fourth}, we apply the proposed multi-sector BD-RIS in multiuser multiple-input single-output (MU-MISO) systems. Specifically, we propose efficient algorithms to maximize the sum-rate performance of the multi-sector BD-RIS-assisted MU-MISO system. 
We first derive the closed-form solution of the multi-sector BD-RIS when its scattering matrix has continuous values. Then we propose an efficient algorithm to obtain the discrete solution.

\textit{Fifth}, we evaluate the sum-rate performance of the multi-sector BD-RIS-aided MU-MISO system. 
Numerical results show that with the same sum-rate performance requirement, the number of antennas for the whole multi-sector BD-RIS can be reduced by increasing the number of sectors. 
In addition, the resolution of the finite codebooks for the multi-sector BD-RIS with discrete values can be reduced with increasing number of sectors, which is beneficial for controlling cost for practical RIS realizations.

\textit{Organization:} Section II proposes a novel multi-sector BD-RIS model and derives the corresponding channel model. 
Section III lists potential applications of the multi-sector BD-RIS. 
Section IV derives the scaling law for the multi-sector BD-RIS-assisted SU-SISO system. Section V provides the beamforming design for the multi-sector BD-RIS-assisted MU-MISO system. 
Section VI evaluates the performance of the proposed design and Section VII concludes this work.

\textit{Notations}:
Boldface lower- and upper-case letters indicate column vectors and matrices, respectively.
$\mathbb{C}$, $\mathbb{R}$, and $\mathbb{Z}$ denote the set of complex numbers, real numbers, and integers, respectively.
$\mathbb{E}\{\cdot\}$ is statistical expectation.
$(\cdot)^\ast$, $(\cdot)^T$, $(\cdot)^H$, and $(\cdot)^{-1}$ denote the conjugate, transpose, conjugate-transpose operations, and inversion, respectively.
$\mathcal{CN}(0,\sigma^2)$ denotes the circularly symmetric complex Gaussian distribution with zero mean and covariance $\sigma^2$.
$\mathcal{U}(a,b)$ denotes the continuous uniform distribution with boundaries $a$ and $b$. 
$\Re \{ \cdot \}$ denotes the real part of a complex number.
$\mathbf{I}_L$ indicates an $L \times L$ identity matrix.
$\mathbf{0}$ denotes an all-zero matrix.
$\| \mathbf{A} \|_F$ denotes the Frobenius norm of matrix $\mathbf{A}$.
$|a|$ denotes the norm of variable $a$.
$\mathsf{diag}(\cdot)$ denotes a diagonal matrix.
$\jmath = \sqrt{-1}$ denotes imaginary unit.
$\mathsf{Tr}\{\cdot\}$ denotes the summation of diagonal elements of a matrix. 
Finally, $[\mathbf{A}]_{i,:}$, $[\mathbf{A}]_{:,j}$, $[\mathbf{A}]_{i,j}$, and $[\mathbf{a}]_i$ denote the $i$-th row, the $j$-th column, the $(i,j)$-th element of matrix $\mathbf{A}$, and the $i$-th element of vector $\mathbf{a}$, respectively.

\section{Multi-Sector BD-RIS Model}

In this section, we propose the multi-sector BD-RIS-assisted communication model based on \cite{shen2021}, \cite{balanis2015}, and specify the channel model based on antenna theory \cite{pozar2011}. 

We consider a general wireless communication system consisting of an $N$-antenna transmitter, $K$ multi-antenna users, each of which has $N_{k}$ antennas, and an $M$-cell RIS.
Denote $\mathcal{N}=\{1,\ldots,N\}$, $\mathcal{K} = \{1,\ldots,K\}$, and $\mathcal{M} = \{1,\ldots, M\}$ as the set of indices of transmitting antennas, users, and cells, respectively. 
Each cell in this RIS consists of $L$ antennas, $L\ge 2$. Particularly, cell $m$ contains antennas belonging to the set $\mathcal{L}_m = \{m, M + m, \ldots, (L-1)M+m\}$, $\forall m \in \mathcal{M}$.
Therefore, the $M$-cell RIS is modeled as $LM$ antennas connected to an $ML$-port passive reconfigurable impedance network\footnote{The proposed RIS model in this work is purely passive without power consumption since the reconfigurable impedance network of the RIS consists of passive circuit components.}\cite{shen2021}. 
The channel between the transmitter and user $k$ without mismatching and mutual coupling among antennas is \cite{shen2021}
\begin{equation}
    \mathbf{H}_{k}=\mathbf{H}_{\mathrm{UT},k}+\overline{\mathbf{H}}_{\mathrm{UI},k}\mathbf{\Phi}\overline{\mathbf{H}}_\mathrm{IT},\forall k\in\mathcal{K},\label{eq:overall_channel}
\end{equation}
where $\mathbf{H}_{\mathrm{UT},k}\in\mathbb{C}^{N_{k}\times N}$ is the channel matrix for the transmitter-user link, $\overline{\mathbf{H}}_{\mathrm{UI},k}\in\mathbb{C}^{N_{k}\times LM}$, $\forall k\in\mathcal{K}$ and $\overline{\mathbf{H}}_\mathrm{IT}\in\mathbb{C}^{LM\times N}$ are channel matrices for the transmitter-RIS-user link.
$\mathbf{\Phi}\in\mathbb{C}^{LM\times LM}$ is the scattering
matrix of the $LM$-port reconfigurable impedance network satisfying 
\cite{pozar2011}
\begin{equation}
    \mathbf{\Phi}^{H}\mathbf{\Phi}=\mathbf{I}_{LM},
    \label{eq:general_RIS_constraint}
\end{equation}
when the $LM$-port reconfigurable impedance network is lossless. The scattering matrix $\mathbf{\Phi}$ is solely used to characterize the reconfigurable impedance network, which is independent of the characteristics of the antenna array. More details regarding the impact of mutual coupling from the antenna array can be found in \cite{shen2021}, \cite{pozar2011}.

The channel model (\ref{eq:overall_channel}) and the constraint for an $M$-cell RIS (\ref{eq:general_RIS_constraint}) are general expressions regardless of architectures of RIS, spatial arrangements and radiation pattern of RIS antenna elements. In the following, we will propose a novel multi-sector BD-RIS model and derive corresponding multi-sector BD-RIS-assisted channel model by specifying the spatial arrangement of RIS antenna elements and the architecture of the RIS. To this end, we first make the following two assumptions.

\begin{figure}
    \centering
    \includegraphics[width = 0.48\textwidth]{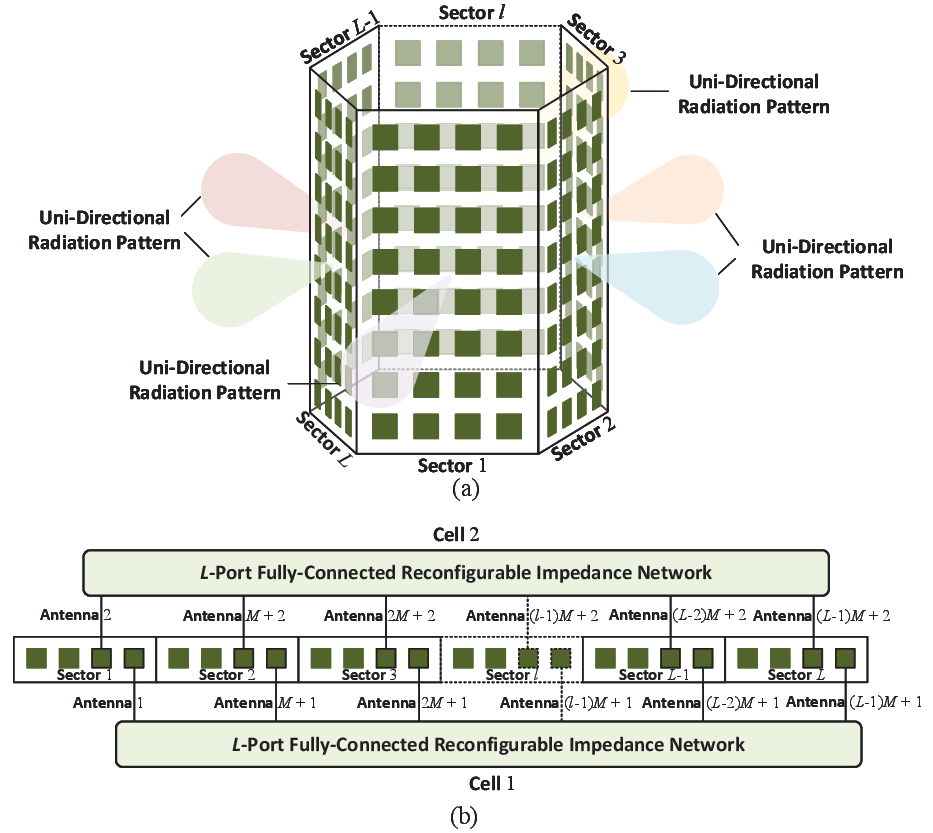}
    \caption{(a) A multi-sector BD-RIS with $LM$ uni-directional antennas modeled as an $L$-sector polygon. (b) Illustration of cells 1 and 2 taking the first row of the multi-sector BD-RIS (stretched in a straight line) as an example.}\label{fig:assumption} 
\end{figure}

\begin{assump}
    Each antenna of the RIS has a uni-directional radiation pattern with beamwidth $\frac{2\pi}{L}$.
\end{assump}

\begin{assump}
    The $L$ antennas in each cell are respectively placed at the side of the polygon such that each antenna covers $\frac{1}{L}$ space, as illustrated in Fig. \ref{fig:assumption} (a).
\end{assump}

Based on Assumptions 1 and 2, we divide the $M$-cell RIS into $L$ sectors. Specifically, sector $l$ covers $1/L$ space containing $M$ antennas, each of which belongs to $\mathcal{M}_l = \{(l-1)M+1, \ldots, lM\}$, $\forall l \in \mathcal{L} = \{1,\ldots,L\}$. The $M$-cell RIS is thus modeled as a polygon with $L$-sectors, which is referred to as multi-sector BD-RIS as shown in Fig. \ref{fig:assumption}(a). 
For clarity, in Fig. \ref{fig:assumption}(b), we also illustrate the ``cell'' of the multi-sector BD-RIS, where every $L$ antennas in the set $\mathcal{L}_m$ are connected to an $L$-port fully-connected reconfigurable impedance network to support the multi-sector mode.

\begin{figure}
    \centering
    \includegraphics[width = 0.48\textwidth]{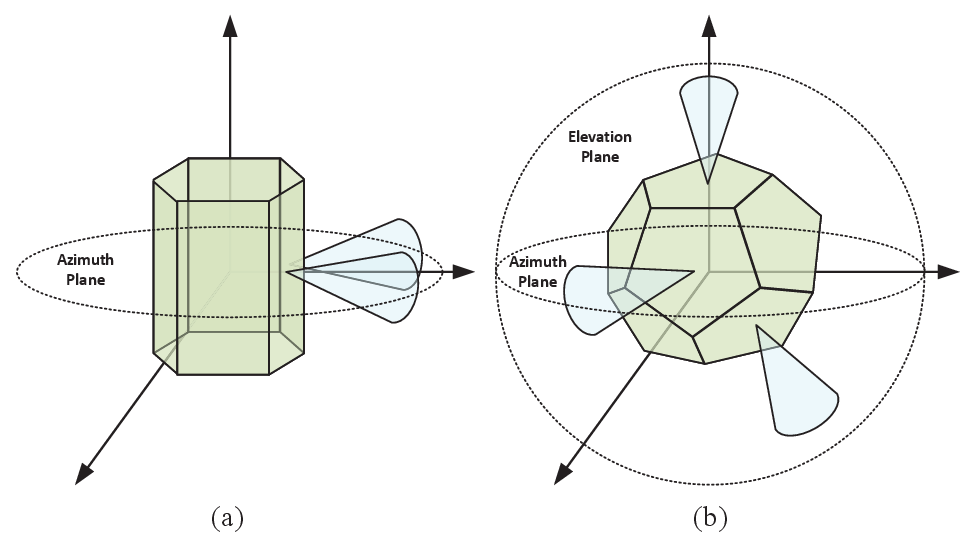}
    \caption{Multi-sector BD-RIS modeled as (a) a 2D polygon with 6 sectors and (b) a 3D polygon with 12 sectors.}\label{fig:3D_example} 
\end{figure}

\begin{remark}
    In this work, we focus on a 2D polygon for illustration and tractability. In Fig. \ref{fig:3D_example}(a), we provide a simple example when the multi-sector BD-RIS is modeled as a 2D polygon with 6 sectors. In this case, we assume the multi-sector BD-RIS has a 2D horizontal coverage at the azimuth plane. However, the concept is also applicable to 3D polygon where antennas are placed on a ``sphere''. For ease of understanding, we provide an example when the multi-sector BD-RIS is modeled as a 3D polygon with 12 sectors in Fig. \ref{fig:3D_example}(b). In this case, the multi-sector BD-RIS enables a 3D full-space coverage.
\end{remark}

With Assumptions 1 and 2, we can partition channel (\ref{eq:overall_channel}) as 
\begin{equation}
    \begin{aligned}
        \mathbf{H}_{k}= & \mathbf{H}_{\mathrm{UT},k}+\left[\overline{\mathbf{H}}_{\mathrm{UI},k,1}~\overline{\mathbf{H}}_{\mathrm{UI},k,2}~\ldots~\overline{\mathbf{H}}_{\mathrm{UI},k,L}\right]\\
        &~~~~\times\left[
        \begin{array}{cccc}
            \mathbf{\Phi}_{1,1} & \mathbf{\Phi}_{1,2} & \cdots & \mathbf{\Phi}_{1,L}\\
            \mathbf{\Phi}_{2,1} & \mathbf{\Phi}_{2,2} & \cdots & \mathbf{\Phi}_{2,L}\\
            \vdots & \vdots & \ddots & \vdots\\
            \mathbf{\Phi}_{L,1} & \mathbf{\Phi}_{L,2} & \cdots & \mathbf{\Phi}_{L,L}
        \end{array}\right]\left[
        \begin{array}{c}
            \overline{\mathbf{H}}_{\mathrm{IT},1}\\
            \overline{\mathbf{H}}_{\mathrm{IT},2}\\
            \vdots\\
            \overline{\mathbf{H}}_{\mathrm{IT},L}
        \end{array}\right]\\
        = & \mathbf{H}_{\mathrm{UT},k} + \sum_{l \in \mathcal{L}}\sum_{l'\in \mathcal{L}} \overline{\mathbf{H}}_{\mathrm{UI},k,l}\mathbf{\Phi}_{l,l'}\overline{\mathbf{H}}_{\mathrm{IT},l'}, \forall k \in \mathcal{K},
    \end{aligned}
    \label{eq:overall_channel1}
\end{equation}
where $\overline{\mathbf{H}}_{\mathrm{UI},k,l}=[\overline{\mathbf{H}}_{\mathrm{UI},k}]_{:,(l-1)M+1:lM}\in\mathbb{C}^{N_{k}\times M}$, and $\overline{\mathbf{H}}_{\mathrm{IT},l}=[\overline{\mathbf{H}}_\mathrm{IT}]_{(l-1)M+1:lM,:}\in\mathbb{C}^{M\times N}$, $\forall k\in\mathcal{K}$, 
$\forall l\in \mathcal{L}$ are channels between sector $l$ of the multi-sector BD-RIS
and user $k$, and between the transmitter and sector $l$, respectively. $\mathbf{\Phi}_{l,l'}=[\mathbf{\Phi}]_{(l-1)M+1:lM,(l'-1)M+1:l'M}\in\mathbb{C}^{M\times M}$,
$\forall l,l'\in\mathcal{L}$. 
To derive the multi-sector BD-RIS-assisted channel model, we further make the following assumption.

\begin{assump}
    The transmitter and $K_{1}$
    users with $\mathcal{K}_{1}=\{1,\ldots,K_{1}\}$, $0<K_{1}<K$
    are located within the coverage of sector 1 of the multi-sector BD-RIS; $K_l$
    users with $\mathcal{K}_{l}=\{\sum_{i \in \{1,\ldots,l-1\}}K_{i}+1,\ldots,\sum_{i \in \{1,\ldots,l\}}K_{i}\}$, $0<K_{l}<K$ are located within the coverage of sector $l$ of the multi-sector BD-RIS, $\forall l \in \mathcal{L}, l\ne 1$, as illustrated in Fig. \ref{fig:assumption_TV}. 
\end{assump}

\begin{figure}
    \centering
    \includegraphics[height=3.5 in]{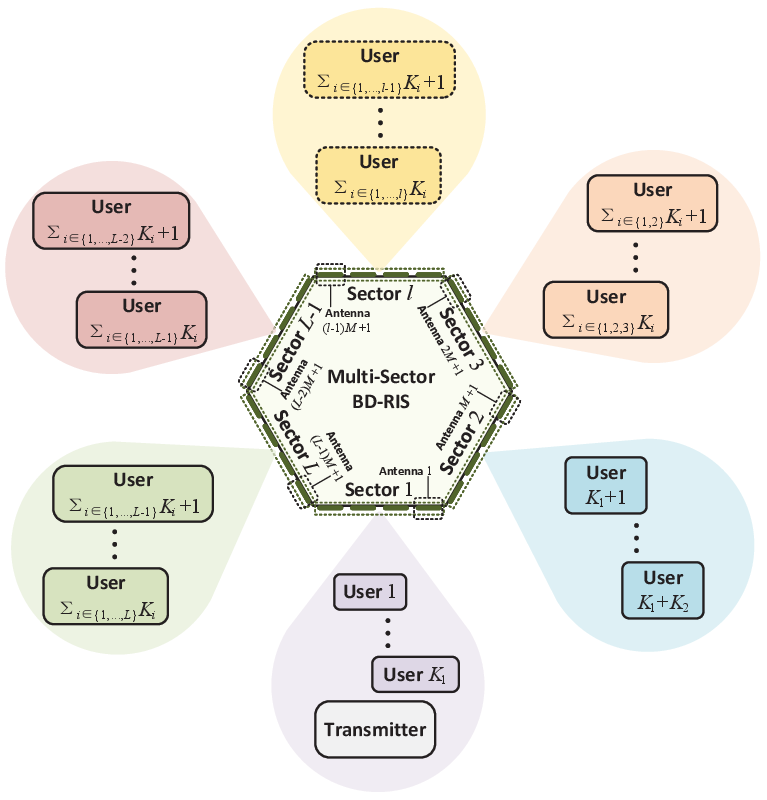}
    \caption{Top view for the locations of the $M$-cell multi-sector BD-RIS partitioning the whole space into $L$ sectors, the transmitter, and multiple users.}\label{fig:assumption_TV} 
\end{figure}

Based on Assumptions 1-3, we can deduce two corollaries.

\begin{corollary}
    The transmitter is out of the coverage of the uni-directional radiation pattern of sector $l$ of the multi-sector BD-RIS, $\forall l \in \mathcal{L}, l\ne 1$, that is $\overline{\mathbf{H}}_{\mathrm{IT},l}=\mathbf{0}$, $\forall l \in \mathcal{L}, l\ne 1$.
\end{corollary}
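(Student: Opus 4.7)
The plan is to combine Assumptions 1--3 as a piece of purely geometric bookkeeping. First I would isolate the spatial arrangement: by Assumption 2, within each cell $m$ the $L$ antennas indexed by $(l-1)M+m$ for $l \in \mathcal{L}$ are placed one per side of the polygon prism, each facing a distinct angular wedge of width $2\pi/L$, and by Assumption 1 each such antenna radiates (and, equivalently on the receive side, has non-zero gain) only inside its own wedge. Since the $L$ wedges partition the whole space, this sets up a one-to-one correspondence between sectors and wedges: the $M$ antennas constituting sector $l$ are precisely those whose radiation pattern is supported on the wedge labeled $l$.

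Next I would invoke Assumption 3 to place the transmitter: it lies in the wedge assigned to sector $1$, and therefore does not lie in the wedge of any sector $l \ne 1$. The block $\overline{\mathbf{H}}_{\mathrm{IT},l}\in\mathbb{C}^{M\times N}$ gathers, row by row, the channel coefficients from the $N$ transmit antennas to the $M$ antennas of sector $l$; each such coefficient is proportional to the sector-$l$ antenna's radiation pattern evaluated along the direction from the transmitter. Because that direction falls outside the $2\pi/L$ beamwidth of every sector-$l$ antenna for $l\ne 1$, every entry of the block vanishes, giving $\overline{\mathbf{H}}_{\mathrm{IT},l}=\mathbf{0}$ for all $l\in\mathcal{L}\setminus\{1\}$ as claimed.

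The only subtlety worth flagging is the idealization embedded in Assumption 1: the uni-directional radiation pattern is taken to be exactly zero outside its $2\pi/L$ beamwidth, not merely strongly attenuated. Once this idealization is granted, no computation is required and the corollary reduces to the one-line statement ``transmitter outside the wedge implies zero channel.'' I would therefore expect the write-up to occupy at most a few lines, with the main expository effort spent on making the wedge/sector correspondence unambiguous so that the partition-of-space argument, rather than an analytic bound on the antenna pattern, is clearly what drives the vanishing.
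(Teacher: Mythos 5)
Your proposal is correct and matches the paper's reasoning: the paper states Corollary 1 without a separate proof, presenting it as an immediate geometric consequence of Assumptions 1--3 (each sector's antennas cover only their own $1/L$ wedge, and the transmitter lies in sector 1's wedge), which is exactly the partition-of-space argument you spell out. Your remark that the conclusion hinges on the radiation pattern being \emph{exactly} zero outside the beamwidth is the right caveat and is consistent with how the paper uses the idealized coverage model to justify the zero channel blocks.
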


\begin{corollary}
    The users in $\mathcal{K}_{l}'$ are out of the coverage of the uni-directional radiation pattern of sector $l$ of the multi-sector BD-RIS, $\forall l\ne l'$, that is $\overline{\mathbf{H}}_{\mathrm{UI},k,l}=\mathbf{0}$, $\forall k\in\mathcal{K}_{l'}$, $\forall l\ne l', \forall l,l'\in\mathcal{L}$.
\end{corollary}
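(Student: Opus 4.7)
The plan is to argue the claim geometrically by chaining Assumptions 1--3 together with the idealized uni-directional radiation model from \cite{pozar2011}, in direct analogy with Corollary 1 but applied on the user side rather than the transmitter side. First, I would invoke Assumption 2 to note that the $M$ antennas forming sector $l$ are placed along one face of the polygonal prism so that their boresights collectively span a wedge of angular width $2\pi/L$ attached to that face. Combining this with Assumption 1, which stipulates that each RIS antenna has a uni-directional radiation pattern of beamwidth exactly $2\pi/L$, the aggregate coverage region of sector $l$ is precisely the $1/L$-th of the full space corresponding to that face, and by construction these $L$ wedges partition the whole space into disjoint regions.

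Second, I would use Assumption 3 to locate every user $k\in\mathcal{K}_{l'}$: such a user sits inside the $1/L$-wedge served by sector $l'$. Since the wedges are disjoint, whenever $l\ne l'$ the user cannot also lie inside the wedge of sector $l$, i.e., user $k$ is geometrically outside the coverage of every antenna whose index belongs to $\mathcal{M}_l$.

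Third, I would close by invoking the standard antenna-pattern argument: under the ideal uni-directional model of Assumption 1, the complex channel gain between an antenna and any point outside its main-beam wedge is zero, so every entry of $\overline{\mathbf{H}}_{\mathrm{UI},k,l}$ (whose rows correspond to the $N_k$ user antennas and whose columns correspond to the $M$ antennas of sector $l$) vanishes for $k\in\mathcal{K}_{l'}$, $l'\ne l$. This yields $\overline{\mathbf{H}}_{\mathrm{UI},k,l}=\mathbf{0}$ for all such $k,l,l'$, as claimed. The proof is essentially a bookkeeping exercise, so the only real subtlety worth flagging is the ideality built into Assumption 1: a realistic radiation pattern would have residual sidelobes and the block would be small rather than exactly $\mathbf{0}$. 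I would note this caveat at the end but not otherwise expect a technical obstacle.
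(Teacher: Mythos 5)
Your proposal is correct and matches the paper's reasoning: the paper offers no separate formal proof, simply deducing the corollary from Assumptions 1--3 via exactly the geometric argument you give (disjoint $1/L$ wedges from Assumptions 1--2, user placement from Assumption 3, hence zero channel entries outside each antenna's main beam). Your closing caveat about sidelobes under a non-ideal pattern is a fair observation but not part of the paper's argument, which treats the coverage regions as exactly disjoint.
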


With Corollaries 1 and 2, we can simplify (\ref{eq:overall_channel1}) as 
\begin{equation}
    \mathbf{H}_{k}=  \mathbf{H}_{\mathrm{UT},k} + \overline{\mathbf{H}}_{\mathrm{UI},k,l}\mathbf{\Phi}_{l,1}\overline{\mathbf{H}}_{\mathrm{IT},1}, \forall k\in\mathcal{K}_{l}, \forall l \in \mathcal{L}.
    \label{eq:simplified_overall_channel}
\end{equation}
For simplicity, we use auxiliary notations $\mathbf{H}_{\mathrm{UI},k}=\overline{\mathbf{H}}_{\mathrm{UI},k,l}$, $\mathbf{\Phi}_l=\mathbf{\Phi}_{l,1}$, $\mathbf{H}_{\mathrm{IT}}=\overline{\mathbf{H}}_{\mathrm{IT},1}$, $\forall k\in\mathcal{K}_{l}$, $\forall l\in\mathcal{L}$. We also ignore the channel from the transmitter to the receiver ($\mathbf{H}_{\mathrm{UT},k} = \mathbf{0}$, $\forall k \in \mathcal{K}$). 
Then we can rewrite the simplified channel (\ref{eq:simplified_overall_channel})
as 
\begin{equation}
    \mathbf{H}_{k}= \mathbf{H}_{\mathrm{UI},k}\mathbf{\Phi}_l\mathbf{H}_\mathrm{IT}, \forall k\in\mathcal{K}_{l}, \forall l\in\mathcal{L}.
    \label{eq:simplified_overall_channel1}
\end{equation}
According to the general constraint (\ref{eq:general_RIS_constraint}), the constraint of matrices $\mathbf{\Phi}_l$, $\forall l\in \mathcal{L}$ of the multi-sector BD-RIS becomes
\begin{equation}
    \sum_{\forall l \in \mathcal{L}} \mathbf{\Phi}_l^H\mathbf{\Phi}_l = \mathbf{I}_M.
    \label{eq:ris_constraint}
\end{equation}

Now we establish the multi-sector BD-RIS-assisted communication model (\ref{eq:ris_constraint}) and (\ref{eq:simplified_overall_channel1}). In the following subsections, we will show the specific descriptions of matrices $\mathbf{\Phi}_l$, $\forall l \in \mathcal{L}$ when the cell-wise single-connected architecture is applied and channels $\mathbf{H}_k$, $\forall k \in \mathcal{K}$ including the impact of antenna gain.

\subsection{Cell-Wise Single-Connected Multi-Sector BD-RIS}

\begin{table}[t]
    \caption{Three Architectures of Multi-Sector BD-RIS}
    \centering 
    \begin{threeparttable}
        \begin{tabular}{|c|c|}
        \hline 
        (Cell-Wise) Architectures & Constraint\tabularnewline
        \hline 
        \multirow{2}*{Single-Connected} & \multirow{2}*{}{$\mathbf{\Phi}_l = \mathsf{diag}(\phi_{(l-1)M+1},\ldots,\phi_{lM})$}\\
        \multirow{2}*{} &\multirow{2}*{}{$\sum_{i\in \mathcal{L}_m} |\phi_i|^2 = 1$, $\forall m\in\mathcal{M}$} \tabularnewline
        \hline 
        \multirow{2}*{Group-Connected\tnote{$\dagger$}} & \multirow{2}*{}{$\mathbf{\Phi}_l = \mathsf{blkdiag}(\mathbf{\Phi}_{l,1}, \ldots, \mathbf{\Phi}_{l,G})$} \\
        \multirow{2}*{} & \multirow{2}*{}{$\sum_{\forall l \in \mathcal{L}}\mathbf{\Phi}_{l,g}^H\mathbf{\Phi}_{l,g} = \mathbf{I}_{\bar{M}}$,$\forall g\in\mathcal{G}$}\\
        \hline 
        Fully-Connected & $\sum_{\forall l \in \mathcal{L}}\mathbf{\Phi}_{l}^H\mathbf{\Phi}_{l} = \mathbf{I}_M$ \tabularnewline
        \hline 
        \end{tabular}
        \begin{tablenotes}
            \footnotesize
            \item[$\dagger$] $G$ is the number of groups \cite{li2022}. $\mathcal{G} = \{1,\ldots,G\}$, $\bar{M} = M/G$.
        \end{tablenotes}
    \end{threeparttable}
    \label{tab:architecture}
\end{table}

In this work, we assume the multi-sector BD-RIS has a cell-wise single-connected architecture, that is, different RIS cells are not connected to each other. 
Similar to \cite{li2022}, with different inter-cell circuit topologies, the multi-sector BD-RIS can also have different architectures, namely, cell-wise single/group/fully-connected architectures, which are summarized in Table \ref{tab:architecture}.
The cell-wise group/fully-connected architectures can provide higher performance gains thanks to the more general constrains as shown in Table \ref{tab:architecture}, but at the expense of higher circuit topology complexity. 
Here we focus only on cell-wise single-connected multi-sector BD-RIS and investigating cell-wise group/fully-connected architectures is left for future work. In the cell-wise single-connected case, $\mathbf{\Phi}_l$, $\forall l\in \mathcal{L}$ can be modeled as diagonal matrices:
\begin{equation}
    \begin{aligned}
    &\mathbf{\Phi}_l = \mathsf{diag}(\phi_{(l-1)M+1},\ldots,\phi_{lM}), \forall l \in \mathcal{L},
    \label{eq:ris_constraint_single}
    \end{aligned}
\end{equation}
and constraint (\ref{eq:ris_constraint}) becomes
\begin{equation}
    \sum_{i\in \mathcal{L}_m} |\phi_i|^2 = 1, \forall m \in \mathcal{M}.
    \label{eq:ris_constraint_single1}
\end{equation}

\begin{remark}
    We should clarify here that the proposed cell-wise single-connected multi-sector BD-RIS is an application of group-connected reconfigurable impedance network with group size equal to $L$.
    When antennas with different spatial arrangements are connected to the same reconfigurable impedance network, we have RIS models with different modes and architectures. In Fig. \ref{fig:example} we provide a simple example to facilitate understanding. We can observe from Fig. \ref{fig:example} that the circuit topology of the impedance network for a 2-cell 4-sector BD-RIS with cell-wise single-connected architecture is exactly the same as that for an 8-element group-connected RIS with group size equal to 4 \cite{shen2021}. The difference is that antennas within each cell of the multi-sector BD-RIS are placed at the side of the polygon, while antennas of the RIS in \cite{shen2021} are placed towards one direction. 
    Therefore, the freedom induced by the group-connected reconfigurable impedance network is utilized to support the multi-sector mode of the multi-sector BD-RIS, but to realize more flexible signal reflection for the group-connected RIS \cite{shen2021}.
\end{remark}

\begin{figure}
    \centering
    \includegraphics[width = 0.48\textwidth]{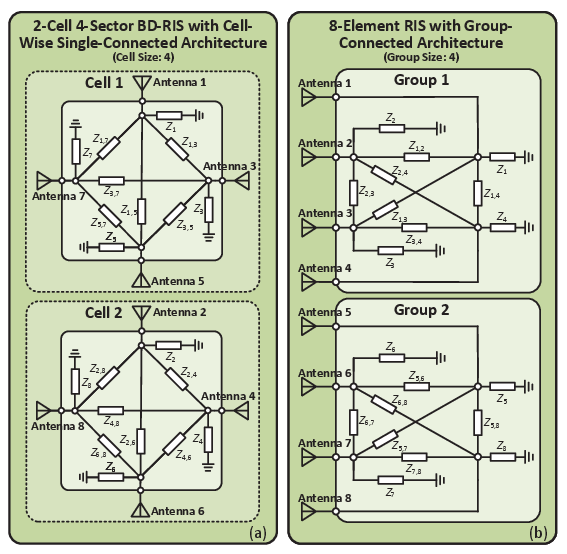}
    \caption{Examples of (a) cell-wise single-connected multi-sector BD-RIS and (b) group-connected RIS in \cite{shen2021}.}\label{fig:example} 
\end{figure}

\begin{remark}
    As per the circuit topology in Fig. \ref{fig:example}(a), to realize an $M$-cell $L$-sector BD-RIS with cell-wise single-connected architecture, $ML$ antennas and $M$ fully-connected $L$-port  reconfigurable impedance networks, each of which consists of $\frac{L(L+1)}{2}$ impedance components, are required. Therefore, the increasing performance with increasing $L$ is achieved at the expense of higher circuit complexity and cost. 
\end{remark}

We model $\phi_i$ as $\phi_i = \sqrt{\beta_i}e^{\jmath\vartheta_i}$, $\forall i \in \mathcal{L}_m$, $\forall m \in \mathcal{M}$, where $0\le\sqrt{\beta_i}\le 1$ and $0\le\vartheta_i\le 2\pi$ denote the amplitude and phase shift of $\phi_i$, respectively. 
Next, we will consider two cases of the multi-sector BD-RIS: 1) continuous case, that is both the amplitude and the phase shift have continuous values\footnote{To implement continuously controllable multi-sector BD-RIS, varactors could be used to realize the reconfigurable impedance network}; 2) discrete case, that is both the amplitude and the phase shift are chosen from finite codebooks\footnote{To implement discrete-value BD-RIS, PIN diodes can be used to realize the reconfigurable impedance network.}.

\subsubsection{Continuous Case} 
In this case, constraint (\ref{eq:ris_constraint_single1}) is transformed into
\begin{equation}
    \text{(\ref{eq:ris_constraint_single1})} \Rightarrow 
    \begin{cases}
        \phi_i = \sqrt{\beta_i}e^{\jmath\vartheta_i}, \forall i \in \mathcal{L}_m, \forall m \in \mathcal{M},\\
        0\le\beta_i\le 1, \forall i \in \mathcal{L}_m, \forall m \in \mathcal{M},\\
        \sum_{i\in \mathcal{L}_m} \beta_i = 1, \forall m \in \mathcal{M},\\
        0\le\vartheta_i< 2\pi, \forall i \in \mathcal{L}_m, \forall m \in \mathcal{M}.
    \end{cases}
    \label{eq:ris_constraint_continuous}
\end{equation}

\subsubsection{Discrete Case} We propose a simple quantization strategy when $\phi_i$ has discrete values. Specifically, the amplitude square $\beta_i$ is uniformly spaced within the range $[0,1]$, i.e., $\beta_i\in\mathcal{A} = \{\frac{a}{2^A-1}|a = 0,1,\ldots,2^A-1\}$ controlled by $A$ bits. The phase shift $\vartheta_i$ is uniformly spaced around the unit circle, i.e., $\vartheta_i\in\mathcal{B} = \{\frac{2\pi b}{2^B}|b = 0,1,\ldots,2^B-1\}$ controlled by $B$ bits. In this case, constraint (\ref{eq:ris_constraint_single1}) is transformed into
\begin{equation}
    \text{(\ref{eq:ris_constraint_single1})} \Rightarrow 
    \begin{cases}
        \phi_i = \sqrt{\beta_i}e^{\jmath\vartheta_i}, \forall i \in \mathcal{L}_m, \forall m \in \mathcal{M},\\
        \beta_i\in\mathcal{A}, \forall i \in \mathcal{L}_m, \forall m \in \mathcal{M},\\
        \sum_{i\in \mathcal{L}_m} \beta_i = 1, \forall m \in \mathcal{M},\\
        \vartheta_i\in \mathcal{B}, \forall i \in \mathcal{L}_m, \forall m \in \mathcal{M}.
    \end{cases}
    \label{eq:ris_constraint_discrete}
\end{equation}

\subsection{Channel Model}

We start by analyzing the path loss of the multi-sector BD-RIS-assisted communication system, which will be used to model the multi-sector BD-RIS-assisted channel (\ref{eq:simplified_overall_channel1}).

\begin{figure}
    \centering
    \includegraphics[width = 0.48\textwidth]{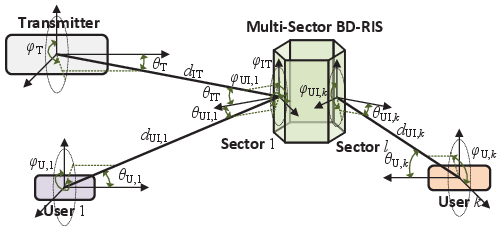}
    \caption{Relative locations among the multi-sector BD-RIS, the transmitter, and users.}\label{fig:locations} 
\end{figure}

The relative locations among the transmitter, the multi-sector BD-RIS, and users are illustrated in Fig. \ref{fig:locations}.
The power captured by each antenna in sector 1 of the multi-sector BD-RIS from the transmitting antenna is \cite{balanis2015}\footnote{In this paper, we consider a far-field case, that is the distance between the transmitter/users and the multi-sector BD-RIS is much larger than the size of the transmitter/RIS/users. In this case, we ignore the antenna spacing at the transmitter/RIS/users and the distance among different sectors of the multi-sector BD-RIS.} 
\begin{equation}
    P_{\mathrm{IT}} = \bar{P}_\mathrm{T}\left(\frac{\lambda}{4\pi}\right)^2d_\mathrm{IT}^{-\epsilon_\mathrm{IT}}
    G_\mathrm{T}(\theta_\mathrm{T},\varphi_\mathrm{T}) G_\mathrm{I}(\theta_{\mathrm{IT}},\varphi_\mathrm{IT}), 
    \label{eq:p_it}
\end{equation}
which is a generalization of the well-known free-space Friis transmission equation, where the power is allowed to decrease at a rate of $d_\mathrm{IT}^{-\epsilon_\mathrm{IT}}$ \cite{munoz2009position} with distance $d_\mathrm{IT}$ between the transmitter and multi-sector BD-RIS and the path loss exponent $\epsilon_\mathrm{IT}$. Specifically, in free-space propagations, the path loss exponent is 2, while in complicated propagations with obstacles, the path loss component is usually larger than 2.
In (\ref{eq:p_it}), $\bar{P}_\mathrm{T}$ is the power radiated from the transmitting antenna\footnote{We assume each transmitting antenna radiates the same power.}. $G_\mathrm{T}(\theta_\mathrm{T},\varphi_\mathrm{T})$ ($G_\mathrm{I}(\theta_\mathrm{IT},\varphi_\mathrm{IT})$) is the gain of each transmitting antenna (each RIS antenna)\footnote{We assume antennas at the same device (transmitter/RIS/user) have the same normalized power radiation pattern.}, where $\theta_\mathrm{T}$ ($\theta_\mathrm{IT}$) and $\varphi_\mathrm{T}$ ($\varphi_\mathrm{IT}$) are elevation and azimuth angles from the transmitter to sector 1 of the multi-sector BD-RIS (from sector 1 of the multi-sector BD-RIS to the transmitter), respectively. 
Then the power scattered from RIS antenna $m$, $\forall m \in \mathcal{M}_l$ in sector $l$, $\forall l \in \mathcal{L}$ can be represented as 
\begin{equation}
    P_{\mathrm{I},m} = P_\mathrm{IT}|\phi_{m}|^2 = P_\mathrm{IT}\beta_m, \forall m \in \mathcal{M}_l, \forall l \in \mathcal{L}.
    \label{eq:p_i}
\end{equation}  
Similarly, the power received by user $k$, $\forall k \in \mathcal{K}_l$ from RIS antenna $m$, $\forall m \in \mathcal{M}_l$ in sector $l$, $\forall l \in \mathcal{L}$ is given by
\begin{equation}
    \begin{aligned}
    P_{\mathrm{U},m,k} &= P_{\mathrm{I},m}\left(\frac{\lambda}{4\pi}\right)^2d_{\mathrm{UI},k}^{-\epsilon_{\mathrm{UI},k}} G_\mathrm{I}(\theta_{\mathrm{UI},k},\varphi_{\mathrm{UI},k})\\
    &\times G_\mathrm{U}(\theta_{\mathrm{U},k},\varphi_{\mathrm{U},k}),
    \forall m \in \mathcal{M}_l, \forall k \in \mathcal{K}_l, \forall l \in \mathcal{L},
    \label{eq:p_u}
    \end{aligned}
\end{equation}
where $\theta_{\mathrm{UI},k}$ ($\theta_{\mathrm{U},k}$) and $\varphi_{\mathrm{UI},k}$ ($\varphi_{\mathrm{U},k}$) are elevation and azimuth angles from RIS antenna $m$ to user $k$ (from user $k$ to RIS antenna $m$), respectively. $d_{\mathrm{UI},k}$ is the distance between the multi-sector BD-RIS and user $k$, and $\epsilon_{\mathrm{UI},k}$ is the path loss exponent.
Combining equations (\ref{eq:p_it})-(\ref{eq:p_u}), we can obtain
\begin{equation}
    \begin{aligned}
    P_{\mathrm{U},m,k} = \bar{P}_\mathrm{T}\beta_m\left(\frac{\lambda}{4\pi}\right)^4d_\mathrm{IT}^{-\epsilon_\mathrm{IT}}d_{\mathrm{UI},k}^{-\epsilon_{\mathrm{UI},k}}
    G_\mathrm{T}(\theta_\mathrm{T},\varphi_\mathrm{T}) &\\
    \times G_\mathrm{U}(\theta_{\mathrm{U},k},\varphi_{\mathrm{U},k})
    G_\mathrm{I}(\theta_{\mathrm{IT}},\varphi_\mathrm{IT}) G_\mathrm{I}(\theta_{\mathrm{UI},k},\varphi_{\mathrm{UI},k}),&\\
    \forall m \in \mathcal{M}_l, \forall k \in \mathcal{K}_l, \forall l \in \mathcal{L}.&
    \label{eq:p_u1}
    \end{aligned}
\end{equation}

To separate factors related to the transmitter/users and to the multi-sector BD-RIS, and focus only on the impact from the multi-sector BD-RIS, we assume 
antennas at the transmitter/users exhibit constant gain, that is $G_\mathrm{T}(\theta_\mathrm{T},\varphi_\mathrm{T}) = G_\mathrm{T}$ and $G_\mathrm{U}(\theta_{\mathrm{U},k},\varphi_{\mathrm{U},k}) = G_\mathrm{U}$, $\forall k \in \mathcal{K}_l$, $\forall l \in \mathcal{L}$.
Then with equation (\ref{eq:p_u1}), we define the cascaded path loss of the multi-sector BD-RIS-assisted communication system as 
\begin{equation}
    \begin{aligned}
        \zeta_k = \frac{(4\pi)^4}{\lambda^4G_\mathrm{T}G_\mathrm{U}}\times\underbrace{\frac{d_\mathrm{IT}^{\epsilon_\mathrm{IT}}d_{\mathrm{UI},k}^{\epsilon_{\mathrm{UI},k}}}{G_\mathrm{I}(\theta_{\mathrm{IT}},\varphi_\mathrm{IT}) G_\mathrm{I}(\theta_{\mathrm{UI},k},\varphi_{\mathrm{UI},k})}}_{\text{Vary with RIS antenna gain}}&, \\
        \forall k \in \mathcal{K}_l, \forall l \in \mathcal{L}&.
        \label{eq:path_loss}
        \end{aligned}
\end{equation}
It can be observed from equation (\ref{eq:path_loss}) that the path loss is highly associated with the gain of RIS antennas. 
In the following we will introduce two expressions of the gain of RIS antennas which are suitable for the multi-sector BD-RIS-assisted communication system.

\subsubsection{Idealized Radiation Pattern} 
We first consider an idealized radiation pattern function for each antenna of the $L$-sector BD-RIS, which is given by 
\begin{equation}
    F_\mathrm{I}^{\text{Idealized}}(\theta_\mathrm{I},\varphi_\mathrm{I}) = 
    \begin{cases}
        1, &0\le\theta_\mathrm{I}\le\frac{\pi}{L}, 0\le\varphi_\mathrm{I} \le 2\pi,\\
        0, &\frac{\pi}{L}<\theta_\mathrm{I} \le \pi, 0\le\varphi_\mathrm{I} \le 2\pi.
    \end{cases}
\end{equation}
According to antenna theory \cite{balanis2015}, by assuming 100\% radiation efficiency, we can calculate the corresponding antenna gain as 
\begin{equation}
    G_\mathrm{I}^{\text{Idealized}}(\theta_{\mathrm{I}},\varphi_\mathrm{I}) = \begin{cases}
        \frac{2}{1-\cos\frac{\pi}{L}}, &0\le\theta_\mathrm{I}\le\frac{\pi}{L},\\
        0, &\frac{\pi}{L}<\theta_\mathrm{I}\le\pi.
    \end{cases}
\end{equation}
With $0\le\theta_\mathrm{IT}\le\frac{\pi}{L}$, and $0\le\theta_{\mathrm{IU},k}\le\frac{\pi}{L}$, $\forall k \in \mathcal{K}_l$, $\forall l \in \mathcal{L}$, the corresponding path loss is given by
\begin{equation}
    \begin{aligned}
        \zeta_{k}^{\text{Idealized}} = \frac{4^3\pi^4d_\mathrm{IT}^{\epsilon_\mathrm{IT}}d_{\mathrm{UI},k}^{\epsilon_{\mathrm{UI},k}}(1-\cos\frac{\pi}{L})^2}{\lambda^4G_\mathrm{T}G_\mathrm{U}}, 
        \forall k \in \mathcal{K}_l, \forall l \in \mathcal{L}.
        \label{eq:path_loss_idealized}
    \end{aligned}
\end{equation}




\subsubsection{Practical Radiation Pattern} 
We then choose a proper and popular radiation pattern function for each antenna of the $L$-sector BD-RIS, which is given by \cite{balanis2015}
\begin{equation}
F_\mathrm{I}^{\text{Practical}}(\theta_\mathrm{I},\varphi_\mathrm{I}) = 
    \begin{cases}
    \cos^{\alpha_L}\theta_\mathrm{I}, &0\le\theta_\mathrm{I}\le\frac{\pi}{2}, 0\le\varphi_\mathrm{I} \le 2\pi,\\
    0, &\frac{\pi}{2}<\theta_\mathrm{I} \le \pi, 0\le\varphi_\mathrm{I} \le 2\pi,
    \end{cases}
\end{equation}
where $\alpha_L > 0$. 
Then we can obtain the half-power beamwidth (HPBW) of an $L$-sector BD-RIS by setting $F_\mathrm{I}(\theta_\mathrm{I},\varphi_\mathrm{I}) = 0.5$, which yields\footnote{Here we set the HPBW of each BD-RIS antenna as $\frac{2\pi}{L}$ to guarantee the full-space coverage. In this case, adjacent sectors are slightly overlapped with each other.} 
\begin{equation}
    \textrm{HPBW}_\mathrm{I}^{\text{Practical}} = 2\arccos\sqrt[\alpha_L]{0.5} = \frac{2\pi}{L},\label{eq:HPBW}
\end{equation}
where $\alpha_L$ can be determined by solving the equation function
$\sqrt[\alpha_L]{0.5} = \cos\frac{\pi}{L}$\footnote{For positive $\cos\frac{\pi}{L}$, the parameter $\alpha_L$ can also be calculated analytically as $\alpha_L = \frac{\log(0.5)}{\log(\cos\frac{\pi}{L})}$.}.
The antenna gain with $\eta = 1$ is  
\begin{equation}
    G_\mathrm{I}^{\text{Practical}}(\theta_{\mathrm{I}},\varphi_\mathrm{I}) = \begin{cases}
        2(\alpha_L+1)\cos^{\alpha_L}\theta_\mathrm{I}, &0\le\theta_\mathrm{I}\le\frac{\pi}{2},\\
        0, &\frac{\pi}{2}<\theta_\mathrm{I}\le\pi.
    \end{cases}
\end{equation}
With $0\le\theta_\mathrm{IT}\le\frac{\pi}{L}$, and $0\le\theta_{\mathrm{IU},k}\le\frac{\pi}{L}$, $\forall k \in \mathcal{K}_l$, $\forall l \in \mathcal{L}$, the path loss is given by
\begin{equation}
    \begin{aligned}
        \zeta_{k}^{\text{Practical}} = \frac{4^3\pi^4d_\mathrm{IT}^{\epsilon_\mathrm{IT}}d_{\mathrm{UI},k}^{\epsilon_{\mathrm{UI},k}}}{\lambda^4G_\mathrm{T}G_\mathrm{U}(\alpha_L+1)^2\cos^{\alpha_L}\theta_\mathrm{IT}\cos^{\alpha_L}\theta_{\mathrm{UI},k}}&, \\
        \forall k \in \mathcal{K}_l, \forall l \in \mathcal{L}&.
        \label{eq:path_loss_specific}
    \end{aligned}
\end{equation}

With the path loss model (\ref{eq:path_loss_idealized}) or (\ref{eq:path_loss_specific}), we can model the multi-sector BD-RIS-assisted channel $\mathbf{H}_k$ (\ref{eq:simplified_overall_channel1}) as a combination of large-scale and small-scale fading, i.e.,
\begin{equation}
    \mathbf{H}_k = \sqrt{\zeta_k^{-1}}\widetilde{\mathbf{H}}_{\mathrm{UI},k}\mathbf{\Phi}_l\widetilde{\mathbf{H}}_\mathrm{IT}, \forall k \in \mathcal{K}_l, \forall l \in \mathcal{L},
    \label{eq:channel_model}
\end{equation} 
where $\zeta_k$, $ \forall k \in \mathcal{K}_l$, $\forall l \in \mathcal{L}$ denote path loss components accounting for the large-scale fading. $\widetilde{\mathbf{H}}_o$, $\forall o\in\mathcal{O}\triangleq\{\mathrm{IT}\}\cup\{\mathrm{UI},k\}_{\forall k}$ is the small-scale fading, such as Rician fading, i.e.,
\begin{equation}
    \begin{aligned}
        \widetilde{\mathbf{H}}_o = \sqrt{\frac{\kappa_o}{\kappa_o + 1}}\widetilde{\mathbf{H}}_o^\mathrm{LoS} + \sqrt{\frac{1}{\kappa_o + 1}}\widetilde{\mathbf{H}}_o^\mathrm{NLoS}, ~ \forall o \in \mathcal{O},
    \end{aligned}\label{eq:rician_h}
\end{equation}
where $\kappa_o$, $\forall o \in \mathcal{O}$ denote the Rician factor, channels with superscript ``$\mathrm{LoS}$'' and ``$\mathrm{NLoS}$'' refer to line-of-sight (LoS) component modeled based on the steering vector and non-line-of-sight (NLoS) component modeled as i.i.d. Rayleigh fading, respectively. 
Specifically, the transmitter and receiver $k$ are modeled as an $N$-antenna uniform linear array (ULA) and an $N_k$-antenna ULA with half-wavelength antenna spacing, yielding the steering vectors 
\begin{subequations}
    \begin{align}
        \mathbf{a}_\mathrm{T}(\theta_\mathrm{T}) &= [1,e^{j\pi\sin\theta_\mathrm{T}},\ldots,e^{j\pi(N-1)\sin\theta_\mathrm{T}}]^T,\\
        \mathbf{a}_{\mathrm{U},k}(\theta_{\mathrm{U},k}) &= [1,e^{j\pi\sin\theta_{\mathrm{U},k}},\ldots,e^{j\pi(N_k-1)\sin\theta_{\mathrm{U},k}}]^T.
    \end{align}
\end{subequations}
Each sector of the multi-sector BD-RIS is modeled as an $M$-antenna uniform planer array (UPA) with $M_\mathrm{x}$ antennas in the x-direction and $M_\mathrm{y}$ antennas in the y-direction, yielding the steering vector
\begin{equation}
    \begin{aligned}
        \mathbf{a}_\mathrm{I}(\theta_o,\varphi_o) = [1,e^{j\pi\sin\theta_o\cos\varphi_o},\ldots,e^{j\pi(M_\mathrm{x}-1)\sin\theta_o\cos\varphi_o}]^T&\\
        \otimes[1,e^{j\pi\sin\theta_o\sin\varphi_o},\ldots,e^{j\pi(M_\mathrm{y}-1)\sin\theta_o\sin\varphi_o}]^T, \forall o\in\mathcal{O}&.
    \end{aligned}
\end{equation}
Therefore, the LoS components for the transmitter-RIS channel and for the RIS-user channels are respectively given by 
\begin{subequations}
    \begin{align}
    \widetilde{\mathbf{H}}_\mathrm{IT}^\mathrm{LoS} &= \mathbf{a}_\mathrm{I}(\theta_\mathrm{IT},\varphi_\mathrm{IT})\mathbf{a}_\mathrm{T}(\theta_\mathrm{T})^H,\\
    \widetilde{\mathbf{H}}_{\mathrm{UI},k}^\mathrm{LoS} &= \mathbf{a}_{\mathrm{U},k}(\theta_{\mathrm{U},k})\mathbf{a}_\mathrm{I}(\theta_{\mathrm{UI},k},\varphi_{\mathrm{UI},k})^H, \forall k\in\mathcal{K}.
    \end{align}\label{eq:los}
\end{subequations}

\begin{remark}
    There are threefold differences between STAR-RIS/IOS \cite{xu2021star}, \cite{zhang2022intelligent} and the proposed multi-sector BD-RIS. 1) STAR-RIS/IOS \cite{xu2021star}, \cite{zhang2022intelligent} is a special instance of our proposed multi-sector BD-RIS with $L = 2$. 2) Our model includes the effect of antenna beamwidth (and gain). Since the coverage of each sector of the multi-sector BD-RIS decreases with the growth of the number of sectors $L$, the antennas utilized to realize the multi-sector BD-RIS have narrower beamwidth, and thus higher gains. The antenna gain has a significant influence on the fading channels as analyzed in Section II-B and thus on the system performance as illustrated in Sections IV-VI. 
    3) The proposed model is built upon a group-connected reconfigurable impedance network \cite{shen2021}, \cite{balanis2015}, which cannot be done with IOS/STAR-RIS modeling \cite{xu2021star}, \cite{zhang2022intelligent} without a rigorous derivation follolwing network analysis theory. 
\end{remark}

Now with the channel model (\ref{eq:channel_model}), together with the path loss model (\ref{eq:path_loss_idealized}) or (\ref{eq:path_loss_specific}) and the small-scale fading model (\ref{eq:rician_h})-(\ref{eq:los}), and the cell-wise single-connected multi-sector BD-RIS model with continuous values (\ref{eq:ris_constraint_continuous}) or discrete values (\ref{eq:ris_constraint_discrete}), we finally establish the multi-sector BD-RIS-assisted communication model. 
In Sections III-VI, we will first discuss the potential applications of the multi-sector BD-RIS and then use this model to evaluate the performance of multi-sector BD-RIS-assisted communication systems.

\section{Potential Applications \\ of Multi-Sector BD-RIS}

In this section, we list potential applications of the proposed multi-sector BD-RIS.

\subsection{Millimeter Wave/Terahertz Communications and Sensing}
The need for more spectrum resources and higher frequency bands, such as millimeter wave or even higher Terahertz frequency bands, to cater to various emerging applications is one of important driving trends for 6G \cite{saad2019vision}, \cite{akyildiz2022terahertz}.  
Wireless communications and sensing on higher frequency bands usually suffer from higher spreading loss, have limited multi-path and high sparsity, and are vulnerable to blockages \cite{akyildiz2022terahertz}. 
The proposed multi-sector BD-RIS is particularly useful for addressing the above issues. 
First, since the antennas utilized to realize the multi-sector BD-RIS have higher gains with the increase of the number of sectors $L$, the multi-sector BD-RIS is favorable to compensate for the high spreading loss for millimeter wave/Terahertz communications and sensing.  
Second, the beams of the multi-sector BD-RIS can be highly directional (with increasing $L$) to apply to millimeter wave/Terahertz communication and sensing systems with strong LoS channels. 
Third, different from conventional RIS with only half-space coverage, the multi-sector BD-RIS enables full-space coverage, which is helpful for the coverage extension (especially when there are multiple obstacles). 
With the above benefits, multi-sector BD-RIS can be utilized in the recently popular and promising ISAC system to compensate for the severe path loss, boost wireless communication and sensing performance, while guaranteeing full-space coverage.

\vspace{0.3 cm}

\subsection{Cell-Free Networks}

Cell-free networks are considered as a promising technique to satisfy the seamless connectivity and demanding coverage requirements for 6G \cite{ngo2017cell}. 
Existing works have investigated the integration of conventional RISs (which can only reflect signals towards one side) and cell-free networks and shown the advantages of deploying RISs in cell-free networks \cite{le2021energy}, \cite{van2021ris}. 
However, in conventional RIS-aided networks, usually multiple RISs should be deployed since each RIS has limited coverage, leading to complicated topology design problems for practical RIS deployments.
The proposed multi-sector BD-RIS can simplify the deployment issue and provide additional benefits due to the following two reasons. 
First, since the multi-sector BD-RIS has full-space coverage, the locations of multi-sector BD-RISs can be more flexible compared to that of conventional ones.  
Specifically, when the number of sectors $L>2$, the multi-sector BD-RIS has a more adaptive structure, which means the multi-sector BD-RIS can be flexibly deployed into different locations, while STAR-RIS ($L=2$) usually needs to be attached to building surfaces. 
Second, with higher performance enhancement and wider coverage, the required number of multi-sector BD-RIS in cell-free networks can be much smaller, which will further facilitate the practical implementations.

\section{Scaling Law}

In this section, we analyze how the received power of the multi-sector BD-RIS-assisted system scales as a function of the number of sectors $L$ to show the benefit of the proposed multi-sector BD-RIS.
Different from the scaling law presented in \cite{wu2019intelligent}, where the impact of the number of RIS elements to the received power is investigated, here we focus on the impact of the number of sectors $L$, which is a novel factor based on the proposed multi-sector BD-RIS model, to the received power.

We consider a multi-sector BD-RIS-assisted SU-SISO system ($N = K = N_k = 1$) for simplicity. The single-antenna transmitter is allocated within the coverage of sector 1 of the multi-sector BD-RIS with $\theta_\mathrm{IT} = 0$. The single-antenna user is randomly allocated around the multi-sector BD-RIS, e.g., within the coverage of sector $l'$, $\forall l'\in\mathcal{L}$. 
The transmitting symbol is $s\in\mathbb{C}$ with $\mathbb{E}\{|s|^2\} = 1$. According to equation (\ref{eq:channel_model}), the received signal is expressed as 
\begin{equation}
    \begin{aligned}
        y = &\sqrt{P_\mathrm{T}}\mathbf{h}_\mathrm{UI}^H\mathbf{\Phi}_{l'}\mathbf{h}_\mathrm{IT}s + n\\
        = &\sqrt{P_\mathrm{T}\zeta^{-1}}\tilde{\mathbf{h}}_\mathrm{UI}^H\mathbf{\Phi}_{l'}\tilde{\mathbf{h}}_\mathrm{IT}s + n,
    \end{aligned}
\end{equation}
where $P_\mathrm{T}$ denotes the transmitting power, $\mathbf{h}_\mathrm{UI}\in\mathbb{C}^M$ and $\mathbf{h}_\mathrm{IT}\in\mathbb{C}^M$ denote channels from sector $l'$ of the multi-sector BD-RIS to the user and from the transmitter to sector 1 of the multi-sector BD-RIS with small-scale fading components $\tilde{\mathbf{h}}_\mathrm{UI}$ and $\tilde{\mathbf{h}}_\mathrm{IT}$, respectively. $\zeta$ is the path loss and $n$ is the noise. 
The received signal power at the user side is given by
\begin{equation}
    P_\mathrm{U} = \left |\sqrt{P_\mathrm{T}\zeta^{-1}}\tilde{\mathbf{h}}_\mathrm{UI}^H\mathbf{\Phi}_{l'}\tilde{\mathbf{h}}_\mathrm{IT}\right |^2 = P_\mathrm{T}\zeta^{-1}|\tilde{\mathbf{h}}_\mathrm{UI}^H\mathbf{\Phi}_{l'}\tilde{\mathbf{h}}_\mathrm{IT}|^2.
\end{equation}
To achieve maximum received signal power, we assume that when the user is within the coverage of sector $l'$, $\forall l' \in \mathcal{L}$, the corresponding sector is activated. In this case, we have 
\begin{equation}
    \mathbf{\Phi}_l = \begin{cases}
        \mathsf{diag}(\phi_{(l-1)M+1}, \ldots, \phi_{lM}), &l=l',\\
        \mathbf{0}, &l\ne l',
    \end{cases}\forall l \in \mathcal{L},
    \label{eq:ris_constraint_siso}
\end{equation}
with $|\phi_j| = 1, \forall j \in \mathcal{M}_{l'}$.
Then, when $\phi_{(l'-1)M+m} = -\angle\big([\tilde{\mathbf{h}}_\mathrm{UI}^H]_m [\tilde{\mathbf{h}}_\mathrm{IT}]_m\big)$, $\forall m \in \mathcal{M}$, the received power is maximized to 
\begin{equation}
    P_\mathrm{U}^{\max} = P_\mathrm{T}\zeta^{-1}\left(\sum_{\forall m\in\mathcal{M}}|[\tilde{\mathbf{h}}_\mathrm{UI}^H]_m [\tilde{\mathbf{h}}_\mathrm{IT}]_m|\right)^2.
\end{equation}
We assume $\tilde{\mathbf{h}}_\mathrm{UI}$ and $\tilde{\mathbf{h}}_\mathrm{IT}$ are LoS channels for simplicity, i.e., $\tilde{\mathbf{h}}_i = [e^{\jmath\psi_{i,1}}, \ldots, e^{\jmath\psi_{i,M}}]^T$, $\forall i \in \{\mathrm{UI, IT}\}$, while the following scaling laws can be easily generalized to different small-scale fading conditions. Then with $\phi_{(l'-1)M+m} = \psi_{\mathrm{UI},m} - \psi_{\mathrm{IT},m}$, $\forall m \in \mathcal{M}$, we have the following maximized received power
\begin{equation}
    P_\mathrm{U}^{\max} = P_\mathrm{T}\zeta^{-1}M^2.
    \label{eq:receive_power}
\end{equation}
It can be observed from equation (\ref{eq:receive_power}) that with fixed $P_\mathrm{T}$, the maximum received power is determined by the number of cells $M$ and the path loss, both of which are highly associated with the number of sectors $L$. In the following, we will derive the relationship between the maximum received power and $L$ by considering both idealized and practical radiation patterns as illustrated in Section II-B. 

\subsubsection{Idealized Radiation Pattern} In this case, the path loss is independent of the location of the user. Therefore, the maximum received signal power is given by 
\begin{equation}
    P_\mathrm{U}^{\text{Idealized}} = \frac{P_\mathrm{T}M^2\lambda^4G_\mathrm{T}G_\mathrm{U}}{4^3\pi^4d_\mathrm{IT}^2d_{\mathrm{UI}}^2(1-\cos\frac{\pi}{L})^2},
    \label{eq:receive_power_idealized}
\end{equation}
where $d_\mathrm{UI}$ denotes the distance between the multi-sector BD-RIS and the user and the path loss components are set as $\epsilon_\mathrm{IT} = \epsilon_\mathrm{UI} = 2$ for LoS propagations. 
From (\ref{eq:receive_power_idealized}) we can deduce that the maximum received power is proportional to the number of sectors $L (L\ge 2)$ when other terms in equation (\ref{eq:receive_power_idealized}) are fixed. However, when the number of antennas for the whole multi-sector BD-RIS, i.e., $ML$, is fixed, the number of antennas for each sector $M$ is inversely proportional to the number of sectors $L$. There might be a trade-off between $M$ and $L$, which will be discussed in the following simulations.



\subsubsection{Practical Radiation Pattern}
The maximum received signal power with $\theta_\mathrm{IT} = 0$ in this case is given by 
\begin{equation}
    P_\mathrm{U}^{\text{Practical}} = \frac{P_\mathrm{T}M^2\lambda^4G_\mathrm{T}G_\mathrm{U}}{4^3\pi^4d_\mathrm{IT}^2d_{\mathrm{IU}}^2}(\alpha_L+1)^2\cos^{\alpha_L}\theta_\mathrm{IU}.
    \label{eq:receive_power_practical}
\end{equation}
Given that the user is randomly allocated around the multi-sector BD-RIS, we have $\theta_\mathrm{IU}\sim \mathcal{U}(0,\pi/L)$.
Therefore, the average of $P_\mathrm{U}^{\text{Practical}}$ is given by 
\begin{equation}
    \begin{aligned}
    \overline{P}_\mathrm{U}^{\text{Practical}} &= \mathbb{E}\{P_\mathrm{U}^{\text{Practical}}\}\\
    &= \frac{P_\mathrm{T}M^2\lambda^4G_\mathrm{T}G_\mathrm{U}}{4^3\pi^4d_\mathrm{IT}^2d_{\mathrm{IU}}^2}(\alpha_L+1)^2\mathbb{E}\{\cos^{\alpha_L}\theta_\mathrm{IU}\}\\
    &\overset{(a)}{\le} \frac{P_\mathrm{T}M^2\lambda^4G_\mathrm{T}G_\mathrm{U}}{4^3\pi^4d_\mathrm{IT}^2d_{\mathrm{IU}}^2}(\alpha_L+1)^2\cos^{\alpha_L}\mathbb{E}\{\theta_\mathrm{IU}\}\\
    &= \frac{P_\mathrm{T}M^2\lambda^4G_\mathrm{T}G_\mathrm{U}}{4^3\pi^4d_\mathrm{IT}^2d_{\mathrm{IU}}^2}(\alpha_L+1)^2\cos^{\alpha_L}\frac{\pi}{2L}\\
    &= \tilde{P}_\mathrm{U}^{\text{Practical}},
    \end{aligned}
    \label{eq:receive_power_average}
\end{equation}
where (a) holds following the Jensen's inequality.
From the relationship $\sqrt[\alpha_L]{0.5} = \cos\frac{\pi}{L}$ we can obtain that the number of sectors $L$ is proportional to $\alpha_L$. 
In addition, with the growth of $L$, $\cos^{\alpha_L}\frac{\pi}{2L}$ is always within the range $0.5<\cos^{\alpha_L}\frac{\pi}{2L}<1$, while $\alpha_L$ grows significantly, such that $\tilde{P}_\mathrm{U}^{\text{Practical}}$ is proportional to the number of sectors $L$ (with fixed $M$). Similarly, when we fix the number of antennas $ML$ for the whole multi-sector BD-RIS, the relationship between the maximum received power and the number of sectors $L$ might not be that straightforward.

\begin{figure}
    \centering
    \subfigure[Number of antennas for each sector $M = 32$]{
    \label{fig:P_L_M_fixed}    
    \includegraphics[width = 0.47\textwidth]{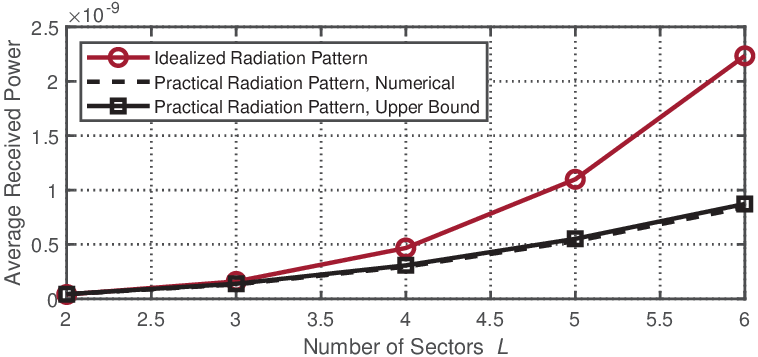}}
    \subfigure[Number of antennas for the whole RIS $ML = 180$]{
    \label{fig:P_L_Mtot_fixed}       
    \includegraphics[width = 0.47\textwidth]{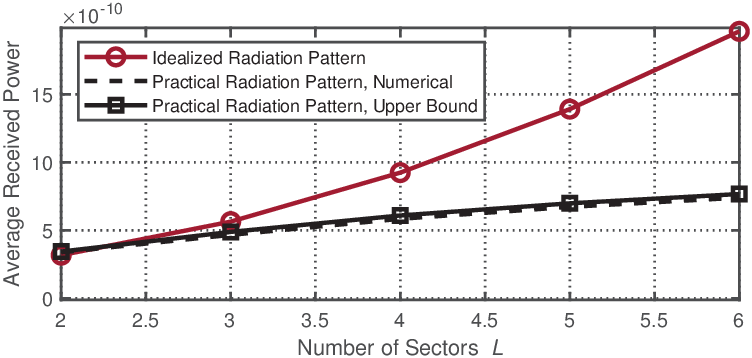}}
    \caption{Average received power versus the number of sectors $L$ for the multi-sector BD-RIS. ($G_\mathrm{T} = G_\mathrm{U} = 1$, $P_\mathrm{T} = 1$ W, $d_\mathrm{IT} = 100$ m, $d_\mathrm{IU} = 10$ m, $\lambda = c/f$ with $f = 2.4$ GHz.)}\label{fig:Power_L} 
\end{figure}

To clarify the relationship between the received power and the number of sectors $L$, in Fig. \ref{fig:P_L_M_fixed} we first plot the received power $P^{\text{Idealized}}_\mathrm{U}$ using an idealized radiation pattern, $\overline{P}^{\text{Practical}}_\mathrm{U}$ with a practical radiation pattern, and its upper bound $\tilde{P}^{\text{Practical}}_\mathrm{U}$ versus the number of sectors $L$ when the number of antennas for each sector is fixed to $M = 32$. It can be observed from Fig. \ref{fig:Power_L} that the received power increase with the growth of $L$, and that $\overline{P}^{\text{Practical}}_\mathrm{U}$ can finely approach its upper bound $\tilde{P}^{\text{Practical}}$. 
Then in Fig. \ref{fig:P_L_Mtot_fixed}, we plot the received power versus the number of sectors when the number of antennas for the whole multi-sector BD-RIS is fixed as $ML = 180$. In this case, the dimension of each sector of the multi-sector BD-RIS decreases with the growth of the number of sectors $L$. The received power still increases significantly, which indicates the impact of the antenna gain is more dominant than that of the dimension of RIS sectors. 
Based on the above analysis, the proposed multi-sector BD-RIS-assisted system is expected to achieve better performance when the number of sectors gets larger. To verify this conjecture, in the following section we will focus on the beamforming design to maximize the sum-rate performance for the multi-sector BD-RIS-assisted system.

\section{Beamforming Design for Multi-Sector BD-RIS-Assisted MU-MISO System}

In this section we apply the multi-sector BD-RIS in MU-MISO communication systems and propose efficient algorithms to design the multi-sector BD-RIS for both continuous and discrete cases. 

\subsection{Problem Formulation and Optimization Framework}

We consider a multi-sector BD-RIS-assisted MU-MISO system ($N_k = 1$, $\forall k \in \mathcal{K}$) in accordance with Section III.  
The transmitting symbol vector is $\mathbf{s} \triangleq [s_1, \ldots, s_K]^T \in \mathbb{C}^{K}$ with $\mathbb{E}\{\mathbf{s}\mathbf{s}^H\} = \mathbf{I}_{K}$. The transmitter applies a precoder matrix $\mathbf{W} \triangleq [\mathbf{w}_1, \ldots, \mathbf{w}_{K}] \in \mathbb{C}^{N\times K}$ with $\mathbf{w}_{k} \in \mathbb{C}^N$ for user $k$, $\forall k \in \mathcal{K}$.
With equations (\ref{eq:simplified_overall_channel1}) and (\ref{eq:ris_constraint_single}), the received signal $y_k$ for user $k$ is expressed as
\begin{equation}
    \label{eq:received_signal}
    \begin{aligned}
        y_k =
        &\mathbf{h}_{\mathrm{UI},k}^H\mathbf{\Phi}_l\mathbf{H}_\mathrm{IT}\mathbf{W}\mathbf{s} + n_{k},\\
        =&\mathbf{v}_{k,k}^H{\bm \phi}_ls_k + \sum_{\substack{p \in \mathcal{K}\\ p \ne k}}\mathbf{v}_{k,p}^H{\bm \phi}_ls_p
        + n_k, \forall k \in \mathcal{K}_l, \forall l \in \mathcal{L}, 
    \end{aligned}
\end{equation}
where $\mathbf{v}_{k,p} = (\mathbf{h}_{\mathrm{UI},k}^H\mathsf{diag}(\mathbf{H}_\mathrm{IT}\mathbf{w}_{p}))^H$, $\forall p,k \in \mathcal{K}$, $\bm{\phi}_l = [\phi_{(l-1)M+1},\ldots,\phi_{lM}]^T$, $\forall l\in\mathcal{L}$, $\mathbf{h}_{\mathrm{UI},k} \in \mathbb{C}^{M}$, $\forall k \in \mathcal{K}$ is the channel vector between the multi-sector BD-RIS and user $k$, and $n_{k} \sim \mathcal{CN}(0, \sigma_{k}^2)$ is the noise.
The sum-rate maximization problem can be formulated as 
\begin{subequations}
    \label{eq:problem0}
    \begin{align}
        \label{eq:obj0}
        \max_{\substack{\mathbf{W}\\ \{{\bm \phi}_l\}_{\forall l \in \mathcal{L}}}} &\sum_{l \in \mathcal{L}}\sum_{k\in \mathcal{K}_l}\log_2\left(1 + \frac{|\mathbf{v}_{k,k}^H{\bm \phi}_l|^2}{\sum_{\substack{p \in \mathcal{K}\\p\ne k}}|\mathbf{v}_{k,p}^H{\bm \phi}_l|^2 + \sigma_k^2}\right)\\
        \label{eq:p0_b}
        \mathrm{s.t.} ~~~ &\text{(\ref{eq:ris_constraint_continuous}) or (\ref{eq:ris_constraint_discrete})},\\
        \label{eq:p0_c}
        &\|\mathbf{W}\|_F^2 \le P_\mathrm{T}.
    \end{align}
\end{subequations}
Problem (\ref{eq:problem0}) is a typical sum-of-functions-of-ratio optimization, which can be solved by the general algorithm proposed in our previous work \cite{li2022}.
However, phase shifts and amplitudes of non-zero multi-sector BD-RIS elements are partitioned based on the illustration in Section III, which will facilitate the multi-sector BD-RIS design. 
Therefore, in this work we follow the framework in \cite{li2022}, whose main idea is to first transform the original problem into a multi-block optimization and then iteratively design each block until convergence, but reconsider the design of the multi-sector BD-RIS block.  
Specifically, we first apply $2K$ auxiliary variables, $\iota_k$ and $\tau_k$, $\forall k \in \mathcal{K}$, and transform (\ref{eq:problem0}) based on fractional programming theory \cite{FP2018} into the following four-block optimization:
\begin{equation}
    \label{eq:problem1}
    \begin{aligned}
        &\max_{\substack{\mathbf{W}, \{\iota_k\}_{\forall k \in \mathcal{K}}\\ \{\tau_k\}_{\forall k \in \mathcal{K}}, \{{\bm \phi}_l\}_{\forall l \in \mathcal{L}}}} \sum_{l \in \mathcal{L}}\sum_{k\in \mathcal{K}_l}\Big(\log_2(1 + \iota_k) - \iota_k + 2\sqrt{1 + \iota_{k}}\\
        &~~~~~~~~~~~~~~~~\times \Re\{\tau_{k}^*\mathbf{v}_{k,k}^H{\bm \phi}_l\} - |\tau_k|^2\big(\sum_{p \in \mathcal{K}}|\mathbf{v}_{k,p}^H{\bm \phi}_l|^2+\sigma_k^2\big)\Big)\\
        &~~~~~~~~~\mathrm{s.t.} ~~~~~  \textrm{(\ref{eq:p0_b}), (\ref{eq:p0_c})}.
    \end{aligned}
\end{equation}
Solutions to blocks $\{\iota_k\}_{\forall k \in \mathcal{K}}$, $\{\tau_k\}_{\forall k \in \mathcal{K}}$, and $\mathbf{W}$ are similar to that in \cite{li2022} with simple scaling of matrix dimensions, which are omitted due to the space limitation. 
In the following subsection, we will propose efficient algorithms to determine block $\{{\bm \phi}_l\}_{\forall l \in \mathcal{L}}$ for both continuous and discrete cases. 

\subsection{Solution to $\{{\bm \phi}_l\}_{\forall l \in \mathcal{L}}$}

When blocks $\{\iota_k\}_{\forall k \in \mathcal{K}}$, $\{\tau_k\}_{\forall k \in \mathcal{K}}$, and $\mathbf{W}$ are given, we can remove the constant terms in problem (\ref{eq:problem1}) and formulate the sub-problem with respect to $\{{\bm \phi}_l\}_{\forall l \in \mathcal{L}}$ as 
\begin{equation}
    \label{eq:problem_phi}
    \begin{aligned}     
        &\max_{\{{\bm \phi}_l\}_{\forall l \in \mathcal{L}}} \sum_{l \in \mathcal{L}}\Big(2\Re\{\tilde{\mathbf{v}}_l^H{\bm \phi}_l\} - {\bm \phi}_l^H\mathbf{V}_l{\bm \phi}_l\Big)\\
        &~~~\mathrm{s.t.} ~~~~ \textrm{(\ref{eq:p0_b})},
    \end{aligned}
\end{equation}
where $\tilde{\mathbf{v}}_l = (\sum_{k\in \mathcal{K}_l}\sqrt{1 + \iota_{k}}\tau_{k}^*\mathbf{v}_{k,k}^H)^H$, and $\mathbf{V}_l = \sum_{k\in \mathcal{K}_l}|\tau_k|^2\sum_{p \in \mathcal{K}}\mathbf{v}_{k,p}\mathbf{v}_{k,p}^H$, $\forall l \in \mathcal{L}$.
We successively design each cell of the multi-sector BD-RIS with fixed other cells until the convergence is guaranteed. Specifically, the sub-problem for the design of cell $m$, i.e., $\phi_i$, $\forall i \in \mathcal{L}_m$ while fixing the rest of $M-1$ cells is given by 
\begin{subequations}
    \label{eq:sub_phi_cell_m}
    \begin{align}
        &\min_{\{\phi_i\}_{\forall i \in \mathcal{L}_m}} \sum_{i \in \mathcal{L}_m}\Big(\nu_i|\phi_i|^2 
        + 2\Re\{\phi_{i}^*\chi_i\}\Big)\\
        \Leftrightarrow  &\min_{\substack{\{\beta_i\}_{\forall i \in \mathcal{L}_m} \\ 
        \label{eq:obj_phi_cell_m}
        \{\vartheta_i\}_{\forall i \in \mathcal{L}_m}}} \sum_{i \in \mathcal{L}_m}\Big(\nu_i\beta_i + 2|\chi_i|\sqrt{\beta_i}\cos({\angle\chi_i - \vartheta_i})\Big)\\
        &~~~~~\mathrm{s.t.} ~~~~ \textrm{(\ref{eq:p0_b})},
    \end{align}
\end{subequations}
where $\chi_{(l-1)M+m} = \sum_{n\ne m}[\mathbf{V}_l]_{m,n}\phi_{(l-1)M+n} - [\tilde{\mathbf{v}}_l]_m$, and $\nu_{(l-1)M+m} = [\mathbf{V}_l]_{m,m}$, $\forall l \in \mathcal{L}$, $\forall m \in \mathcal{M}$.
In the following we will consider the phase shift and amplitude design for both continuous and discrete cases.

\subsubsection{Continuous Case with Constraint (\ref{eq:ris_constraint_continuous})}
In this case, phase shifts $\vartheta_i$, $\forall i \in \mathcal{L}_m$ can be easily obtained by setting $\cos({\angle\chi_i - \vartheta_i}) = -1$, $\forall i \in \mathcal{L}_m$, which yields 
\begin{equation}\label{eq:theta_continuous}
    \begin{aligned}
        \theta_i^\star = \angle\chi_i \pm \pi, \forall i \in \mathcal{L}_m.
    \end{aligned}
\end{equation}
Substituting $\theta_{i}^\star$, $\forall i \in \mathcal{L}_m$ into objective (\ref{eq:obj_phi_cell_m}), we can obtain a real-value problem with respect to $\beta_i$, $\forall i \in \mathcal{L}_m$ as 
\begin{subequations}
    \label{eq:amplitude_cell_m_continuous}
    \begin{align}
        \min_{\{\beta_i\}_{\forall i \in \mathcal{L}_m}} &\sum_{i \in \mathcal{L}_m}\Big(\nu_i\beta_i - 2|\chi_i|\sqrt{\beta_i}\Big)\\
        \Leftrightarrow ~~\min_{{\bm \beta}_m} ~\;\;&\underbrace{{\bm \beta}_m^T\mathbf{\Psi}_m{\bm \beta}_m - 2{\bm \chi}_m^T{\bm \beta}_m}_{=f({\bm \beta}_m)}\\
        \label{eq:amplitude_constraint_c}
        \mathrm{s.t.} ~~~ &0\le\beta_i\le 1, \forall i \in \mathcal{L}_m,\\  
        \label{eq:amplitude_constraint_d}  
        &{\bm \beta}_m^T{\bm \beta}_m = 1,
    \end{align}
\end{subequations}
where ${\bm \beta}_m = [\sqrt{\beta_{m}}, \sqrt{\beta_{M + m}}, \ldots, \sqrt{\beta_{(L-1)M+m}}]^T\in\mathbb{R}^L$, $\mathbf{\Psi}_m = \mathsf{diag}(\nu_{m}, \nu_{M+m}, \ldots, \nu_{(L-1)M+m})\in\mathbb{R}^{L\times L}$, and ${\bm \chi}_m = [|\chi_{m}|, |\chi_{M+m}|, \ldots, |\chi_{(L-1)M+m}|]^T\in\mathbb{R}^L$.
Since $|\chi_i| > 0$, $\forall i \in \mathcal{L}_m$, constraint (\ref{eq:amplitude_constraint_c}) can always be satisfied when problem (\ref{eq:amplitude_cell_m_continuous}) achieve its minimum. This observation motivates us to ignore constraint (\ref{eq:amplitude_constraint_c}), which yields 
\begin{equation}
    \label{eq:amplitude_cell_m_continuous1}
    {\bm \beta}_m^\star = \arg\min_{{\bm \beta}_m^T{\bm \beta}_m = 1} f({\bm \beta}_m)
\end{equation}
The optimal solution to problem (\ref{eq:amplitude_cell_m_continuous1}) can be determined by the following lemma \cite{hager2001}. 

\begin{lemma}
${\bm \beta}_m$ is the optimal solution to problem (\ref{eq:amplitude_cell_m_continuous1}) if and only if there exists $\mu \in \mathbb{R}$ such that $(\mathbf{\Psi}_m + \mu\mathbf{I}_L){\bm \beta}_m = {\bm \chi}_m$, $\mathbf{\Psi}_m + \mu\mathbf{I}_L$ positive semi-definite, and ${\bm \beta}_m^T{\bm \beta}_m = 1$.
\end{lemma}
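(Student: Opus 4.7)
The plan is to recognize that problem (\ref{eq:amplitude_cell_m_continuous1}) is the classical problem of minimizing an indefinite quadratic over the unit sphere (the equality-constrained trust region subproblem), so the lemma is essentially the well-known optimality characterization of Gay/Moré--Sorensen/Hager. I will organize the proof as a pair of implications: sufficiency (the three listed conditions together imply optimality) and necessity (optimality implies the three conditions hold).

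For sufficiency, I would assume $\bm{\beta}_m$ satisfies $(\mathbf{\Psi}_m + \mu\mathbf{I}_L)\bm{\beta}_m = \bm{\chi}_m$, $\mathbf{\Psi}_m + \mu\mathbf{I}_L \succeq \mathbf{0}$, and $\bm{\beta}_m^T\bm{\beta}_m = 1$, and for any other feasible $\bm{\beta}$ with $\bm{\beta}^T\bm{\beta} = 1$ compute the gap $f(\bm{\beta}) - f(\bm{\beta}_m)$. Using the stationarity equation to replace $\bm{\chi}_m$, expanding, and exploiting $\bm{\beta}^T\bm{\beta} = \bm{\beta}_m^T\bm{\beta}_m$ to rewrite the linear-in-$\mu$ piece as a quadratic in $\bm{\beta} - \bm{\beta}_m$, I expect the identity
\begin{equation}
f(\bm{\beta}) - f(\bm{\beta}_m) = (\bm{\beta} - \bm{\beta}_m)^T(\mathbf{\Psi}_m + \mu\mathbf{I}_L)(\bm{\beta} - \bm{\beta}_m),\nonumber
\end{equation}
which is nonnegative precisely by the PSD hypothesis, establishing that $\bm{\beta}_m$ is a global minimizer. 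This is the clean algebraic core of the proof and should go through by direct manipulation.

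For necessity, I would first apply first-order KKT conditions to the equality-constrained problem: the Lagrangian $\mathcal{L}(\bm{\beta},\mu) = \bm{\beta}^T\mathbf{\Psi}_m\bm{\beta} - 2\bm{\chi}_m^T\bm{\beta} + \mu(\bm{\beta}^T\bm{\beta} - 1)$ yields stationarity $(\mathbf{\Psi}_m + \mu\mathbf{I}_L)\bm{\beta}_m = \bm{\chi}_m$ with primal feasibility $\bm{\beta}_m^T\bm{\beta}_m = 1$. The delicate part is to show the PSD condition on $\mathbf{\Psi}_m + \mu\mathbf{I}_L$ is necessary for \emph{global} optimality (the sphere is nonconvex, so vanilla KKT only gives stationarity). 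I would argue by contradiction: if $\mathbf{\Psi}_m + \mu\mathbf{I}_L$ had a negative eigenvalue with eigenvector $\mathbf{u}$, then the perturbed point $\bm{\beta}_m(t) = \cos(t)\bm{\beta}_m + \sin(t)\mathbf{u}$ (normalized so it remains on the sphere) gives, via the sufficiency identity applied in reverse, $f(\bm{\beta}_m(t)) - f(\bm{\beta}_m) = (\bm{\beta}_m(t) - \bm{\beta}_m)^T(\mathbf{\Psi}_m + \mu\mathbf{I}_L)(\bm{\beta}_m(t) - \bm{\beta}_m) < 0$ for small $t \neq 0$, contradicting optimality of $\bm{\beta}_m$.

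The main obstacle I anticipate is this second-order/global part of necessity: showing PSD (not just the easier statement that $\mathbf{\Psi}_m + \mu\mathbf{I}_L$ is PSD on the tangent space of the sphere at $\bm{\beta}_m$). The trick of using the same identity from the sufficiency step in reverse is exactly what makes the full PSD condition follow, but it relies on the sphere being rich enough to move in any direction $\mathbf{u}$ while staying feasible, which is true here because $\bm{\beta}_m^T\bm{\beta}_m = 1$ always admits such rotational perturbations. Once this is in place, the proof is just bookkeeping, and one can simply cite \cite{hager2001} for the detailed perturbation analysis.
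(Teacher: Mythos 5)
Your proposal follows essentially the same route as the paper's proof: Lagrangian stationarity gives $(\mathbf{\Psi}_m + \mu\mathbf{I}_L)\bm{\beta}_m = \bm{\chi}_m$ with $\bm{\beta}_m^T\bm{\beta}_m = 1$, and your key identity $f(\bm{\beta}') - f(\bm{\beta}_m) = (\bm{\beta}'-\bm{\beta}_m)^T(\mathbf{\Psi}_m + \mu\mathbf{I}_L)(\bm{\beta}'-\bm{\beta}_m)$ for $\bm{\beta}'$ on the unit sphere is exactly the rearrangement the paper performs in (\ref{eq:inequ1})--(\ref{eq:inequ2}). You go further than the paper in one respect: you prove the sufficiency (``if'') direction explicitly from this identity, whereas the paper's written proof only establishes necessity even though the lemma is stated as an equivalence.

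One step of your necessity argument does not work as literally written. The perturbation $\bm{\beta}_m(t) = \cos(t)\bm{\beta}_m + \sin(t)\mathbf{u}$ stays on the sphere only when $\mathbf{u}$ is a unit vector orthogonal to $\bm{\beta}_m$; after the normalization you invoke, the first-order displacement $\bm{\beta}_m(t) - \bm{\beta}_m$ is the \emph{tangential projection} of $\mathbf{u}$, so this construction only certifies nonnegativity of the quadratic form on the tangent space at $\bm{\beta}_m$ --- precisely the weaker statement you said you wanted to avoid. The fix, implicit in the paper and in \cite{hager2001}, is a chord argument rather than a rotational one: for any direction $\mathbf{d}$ with $\mathbf{d}^T\bm{\beta}_m \neq 0$, the point $\bm{\beta}' = \bm{\beta}_m + t\mathbf{d}$ with $t = -2(\mathbf{d}^T\bm{\beta}_m)/(\mathbf{d}^T\mathbf{d})$ lies on the unit sphere and has $\bm{\beta}' - \bm{\beta}_m$ parallel to $\mathbf{d}$, so the identity combined with $f(\bm{\beta}') \ge f(\bm{\beta}_m)$ forces $\mathbf{d}^T(\mathbf{\Psi}_m+\mu\mathbf{I}_L)\mathbf{d} \ge 0$ for every such $\mathbf{d}$; since these directions are dense and the quadratic form is continuous, positive semi-definiteness on all of $\mathbb{R}^L$ follows. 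With that substitution your argument is complete and coincides with the paper's.
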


\begin{proof}
Please refer to Appendix A. 
\end{proof}

According to Lemma 1, ${\bm \beta}_m$ has a closed-form solution as a function of $\mu$, which is constrained by 
\begin{equation}
    \begin{aligned}
        g(\mu) &= {\bm \beta}_m^{\star T}{\bm \beta}_m^{\star}
        = {\bm \chi}_m^T(\mathbf{\Psi}_m + \mu\mathbf{I}_L)^{-2}{\bm \chi}_m\\
        &= \mathsf{Tr}\{(\mathbf{\Psi}_m + \mu\mathbf{I}_L)^{-2}{\bm \chi}_m{\bm \chi}_m^T\}\\
        &\overset{\text{(a)}}{=} \mathsf{Tr}\{(\mathbf{U}_m\mathbf{\Sigma}_m\mathbf{U}_m^T + \mu\mathbf{I}_L)^{-2}{\bm \chi}_m{\bm \chi}_m^T\}\\
        &= \mathsf{Tr}\{(\mathbf{\Sigma}_m + \mu\mathbf{I}_L)^{-2}\mathbf{U}_m^T{\bm \chi}_m{\bm \chi}_m^T\mathbf{U}_m\}\\
        &\overset{\text{(b)}}{=} \sum_{l\in \mathcal{L}}\frac{\tilde{\chi}_{m,l}^2}{(\varepsilon_{m,l} + \mu)^2} = 1,  
    \end{aligned}
\end{equation} 
where (a) holds by performing the eigenvalue decomposition $\mathbf{\Psi}_m = \mathbf{U}_m\mathbf{\Sigma}_m\mathbf{U}_m^T$ with $\mathbf{\Sigma}_m = \mathsf{diag}(\varepsilon_{m,1}, \ldots, \varepsilon_{m,L})$, $\varepsilon_{m,1}\le\varepsilon_{m,2}\le\ldots\le\varepsilon_{m,L}$, and $\mathbf{U}_m = [\mathbf{u}_{m,1}, \ldots, \mathbf{u}_{m,L}]$; (b) holds by defining $\tilde{\chi}_{m,l} = {\bm \chi}_m^T\mathbf{u}_{m,l}$, $\forall l \in \mathcal{L}$. Recall that $|\chi|_i > 0$, $\nu_i > 0$, $\forall i \in \mathcal{L}_m$. We have $\tilde{\chi}_{m,l} \ne 0$ and thus $\varepsilon_{m,l} + \mu \ne 0$. In addition, according to Lemma 1, $\mathbf{\Psi}_m + \mu\mathbf{I}$ is positive semi-definite such that $\varepsilon_{m,l} + \mu \ge 0$. Therefore, the optimal value of $\mu$ should be chosen such that $\varepsilon_{m,l} + \mu > 0$, $\forall l \in \mathcal{L}$ and $g(\mu) = 1$.    
In the following lemma, we derive the upper bound and lower bound of $\mu$ \cite{hager2001}.

\begin{lemma}
Define $\mathcal{E} = \{l |\varepsilon_{m,l} = \varepsilon_{m,1}, \forall l \in \mathcal{L}\}$. The optimal solution of $\mu$ is chosen within the range 
$\sqrt{\sum_{\forall l \in \mathcal{E}}\tilde{\chi}_{m,l}^2} - \varepsilon_{m,1} \le\mu\le \|{\bm \chi}_m\|_2 - \varepsilon_{m,1}$.
\end{lemma}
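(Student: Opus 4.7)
The plan is to derive both inequalities by direct manipulation of the secular equation $g(\mu) = \sum_{l \in \mathcal{L}} \tilde{\chi}_{m,l}^2/(\varepsilon_{m,l}+\mu)^2 = 1$ established just before the lemma statement, combined with the strict positivity $\varepsilon_{m,1} + \mu > 0$ that the preceding discussion already secured (positive semi-definiteness of $\mathbf{\Psi}_m + \mu \mathbf{I}_L$ yields $\varepsilon_{m,l} + \mu \ge 0$, and $\tilde{\chi}_{m,l} \ne 0$ together with $g(\mu)$ finite forces the inequality to be strict). Because $\varepsilon_{m,1}$ is the smallest eigenvalue, the quantity $\varepsilon_{m,1} + \mu$ is the smallest denominator appearing in $g(\mu)$ and is positive, so inversion and square roots carry no sign ambiguity.

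For the upper bound, I would lower-bound every denominator in $g(\mu)$ by $\varepsilon_{m,1} + \mu$, obtaining
\begin{equation}
1 = \sum_{l \in \mathcal{L}} \frac{\tilde{\chi}_{m,l}^2}{(\varepsilon_{m,l}+\mu)^2} \le \frac{1}{(\varepsilon_{m,1}+\mu)^2}\sum_{l \in \mathcal{L}} \tilde{\chi}_{m,l}^2 = \frac{\|{\bm \chi}_m\|_2^2}{(\varepsilon_{m,1}+\mu)^2},
\end{equation}
where the last equality uses orthogonality of $\mathbf{U}_m$, which preserves $\ell_2$ norms so that $\|\tilde{\bm \chi}_m\|_2 = \|{\bm \chi}_m\|_2$. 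Taking positive square roots and rearranging yields $\mu \le \|{\bm \chi}_m\|_2 - \varepsilon_{m,1}$.

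For the lower bound, I would keep only the indices $l \in \mathcal{E}$ that attain the smallest eigenvalue, discarding the remaining non-negative terms:
\begin{equation}
1 = \sum_{l \in \mathcal{L}} \frac{\tilde{\chi}_{m,l}^2}{(\varepsilon_{m,l}+\mu)^2} \ge \sum_{l \in \mathcal{E}} \frac{\tilde{\chi}_{m,l}^2}{(\varepsilon_{m,1}+\mu)^2} = \frac{\sum_{l \in \mathcal{E}} \tilde{\chi}_{m,l}^2}{(\varepsilon_{m,1}+\mu)^2},
\end{equation}
which upon taking positive square roots produces $\mu \ge \sqrt{\sum_{l \in \mathcal{E}} \tilde{\chi}_{m,l}^2} - \varepsilon_{m,1}$.

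Both bounds are therefore monotone term-by-term estimates of the secular function, so I do not expect a real obstacle. The only subtle point worth being explicit about is why $\varepsilon_{m,1} + \mu$ is \emph{strictly} positive rather than merely non-negative, as this justifies dividing by it and extracting a square root; this is handled by citing the preceding observation that $g(\mu) = 1$ is finite while each $\tilde{\chi}_{m,l}$ is non-zero, ruling out $\varepsilon_{m,1} + \mu = 0$.
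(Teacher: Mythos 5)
Your proposal is correct and follows essentially the same route as the paper's proof: bound all denominators below by $\varepsilon_{m,1}+\mu$ for the upper bound, and drop all terms outside $\mathcal{E}$ for the lower bound, using $\varepsilon_{m,1}+\mu>0$ to take square roots. The only addition is that you make explicit the norm-preservation $\|\tilde{\bm\chi}_m\|_2=\|{\bm\chi}_m\|_2$ under the orthogonal $\mathbf{U}_m$, which the paper uses implicitly.
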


\begin{proof}
Please refer to Appendix B. 
\end{proof}

After determining the upper and lower bounds of $\mu$ according to Lemma 2, we can use efficient methods, such as bisection search, to find the optimal $\mu^\star$ such that $g(\mu^\star) = 1$. The optimal solution of problem (\ref{eq:amplitude_cell_m_continuous1}) is given by 
\begin{equation}
    \label{eq:opt_beta_continuous}
    {\bm \beta}_m^\star = (\mathbf{\Psi}_m + \mu^\star\mathbf{I}_L)^{-1}{\bm \chi}_m.
\end{equation}

Combining (\ref{eq:theta_continuous}) and (\ref{eq:opt_beta_continuous}), we finally obtain the solution of problem (\ref{eq:sub_phi_cell_m}) as 
$\phi_i^\star = \sqrt{\beta_i^\star}e^{\jmath\vartheta_i^\star}$, $\forall i \in \mathcal{L}_m$. 

\subsubsection{Discrete Case with Constraint (\ref{eq:ris_constraint_discrete})}
In this case, 
we first find the solution of phase shifts $\vartheta_i^\star$, $\forall i \in \mathcal{L}_m$ for continuous case by (\ref{eq:theta_continuous}). Then we apply a simple quantization operation to obtain the discrete phase shifts as 
\begin{equation}
    \label{eq:theta_discrete}
    \vartheta_i^\sharp = \Big\lceil \frac{\vartheta_i^\star}{\triangle_B} \Big\rfloor \times \triangle_B, \forall i \in \mathcal{L}_m,
\end{equation}   
where $\triangle_B = \frac{2\pi}{2^B}$ denotes the resolution of phase shift and $\lceil\cdot\rfloor$ denotes the rounding operation. Then we substitute (\ref{eq:theta_discrete}) to problem (\ref{eq:sub_phi_cell_m}), which yields
\begin{equation}
    \label{eq:aplitude_cell_m_discrete}
    \begin{aligned}
        \beta_i^\sharp = \arg\min_{\substack{\sum_{i\in\mathcal{L}_m}\beta_i = 1 \\ \beta_i\in \mathcal{A}}} ~\sum_{\forall i \in \mathcal{L}_m}f_i(\beta_i)
    \end{aligned}
\end{equation}
where $f_i(\beta_i) = \beta_i\nu_i + 2|\chi_i|\sqrt{\beta_i}\cos(\angle\chi_i - \vartheta_i^\sharp)$, $\forall i \in \mathcal{L}_m$.
Ignoring the discrete constraint and following the procedure for the design of continuous case, we can obtain the continuous solution $\beta_i^\star$, $\forall i \in \mathcal{L}_m$, which is further quantized by 
\begin{equation}
    \label{eq:amplitude_discrete}
    \beta_i^\sharp = \Big\lceil \frac{\beta_i^\star}{\triangle_A} \Big\rfloor \times \triangle_A, \forall i \in \mathcal{L}_m,
\end{equation} 
where $\triangle_A = \frac{1}{2^A-1}$ denotes the resolution of amplitude square.
It is worth noting that there is no guarantee of $\sum_{i\in\mathcal{L}_m}\beta_i^\sharp = 1$. More specifically, we have $\sum_{i\in\mathcal{L}_m}\beta_i^\sharp = 1 + \varsigma\triangle_A$, $\zeta \in \mathbb{Z}$ due to the rounding operation. In the following, we will propose a simple yet efficient algorithm to adjust the quantization result (\ref{eq:amplitude_discrete}) so that the constraint is satisfied. The proposed algorithm is described in Phases 1-3.

\textit{Phase 1:} Let $\varsigma = \frac{1-\sum_{\forall i \in \mathcal{L}_m}\beta_i^{\sharp}}{\triangle_A}$. If $\varsigma = 0$, the procedure stops; if $\varsigma > 0$, go to Phase 2; if $\varsigma < 0$, go to Phase 3.

\textit{Phase 2:} Set $\varsigma' = 1$. In the $\varsigma'$-th loop, do the following procedures until $\varsigma' > \varsigma$: For $\forall i \in \mathcal{L}_m$, if $\beta_i^\sharp < 1$, calculate $\delta_i = f_i(\beta_i^\sharp + \triangle_A) - f_i(\beta_i^\sharp)$; otherwise, set $\delta_i$ to a sufficient large number, e.g., $\infty$. Then find $i'$ such that $\delta_{i'} = \min_{\forall i \in \mathcal{L}_m}\delta_i$ and replace $\beta_{i'}^\sharp$ with $\beta_{i'}^\sharp + \triangle_A$. Update $\varsigma' = \varsigma' + 1$.

\textit{Phase 3:} Set $\varsigma' = 1$. In the $\varsigma'$-th loop, do the following procedures until $\varsigma' > -\varsigma$: For $\forall i \in \mathcal{L}_m$, if $\beta_i^\sharp > 0$, calculate $\delta_i = f_i(\beta_i^\sharp - \triangle_A) - f_i(\beta_i^\sharp)$; otherwise, set $\delta_i$ to $\infty$. Then find $i'$ satisfying $\delta_{i'} = \min_{\forall i \in \mathcal{L}_m}\delta_i$ and replace $\beta_{i'}^\sharp$ with $\beta_{i'}^\sharp - \triangle_A$. Update $\varsigma' = \varsigma' + 1$.

After processing Phases 1-3, we can obtain the modified $\beta_i^\sharp$, $\forall i \in \mathcal{L}_m$ satisfying the constraint $\sum_{\forall i \in \mathcal{L}_m}\beta_i^\sharp = 1$, and the solution of problem (\ref{eq:sub_phi_cell_m}) as $\phi_i^\sharp = \sqrt{\beta_i^\sharp}e^{\jmath\vartheta_i^\sharp}$, $\forall i \in \mathcal{L}_m$.

\subsubsection{Summary} 
With solutions to sub-problem (\ref{eq:sub_phi_cell_m}), the procedures of multi-sector BD-RIS design for continuous/discrete cases are summarized as following Steps 1-4:

\textit{Step 1:} Calculate $\tilde{\mathbf{v}}_l = (\sum_{k\in \mathcal{K}_l}\sqrt{1 + \iota_{k}}\tau_{k}^*\mathbf{v}_{k,k}^H)^H$ and $\mathbf{V}_l = \sum_{k\in \mathcal{K}_l}|\tau_k|^2\sum_{p \in \mathcal{K}}\mathbf{v}_{k,p}\mathbf{v}_{k,p}^H$, $\forall l \in \mathcal{L}$. Go to Step 2 for continuous case or Step 3 for discrete case. 

\textit{Step 2:} For continuous case, do the following procedures for $\forall m \in \mathcal{M}$: 
1) Calculate $\chi_i$ in problem (\ref{eq:sub_phi_cell_m}) and obtain $\theta_i^\star = \angle\chi_i \pm \pi$, $\forall i \in \mathcal{L}_m$. 
2) Find $\mu_m^\star$ by bisection search based on Lemma 2 such that $g(\mu_m^\star) = 1$, and obtain ${\bm \beta}_m^\star = (\mathbf{\Psi}_m + \mu_m^\star\mathbf{I}_L)^{-1}{\bm \chi}_m$. 
3) Update $\phi_i^\star = \sqrt{\beta_i^\star}e^{\jmath\vartheta_i^\star}$, $\forall i \in \mathcal{L}_m$.

\textit{Step 3:} For discrete case, do the following procedures for $\forall m \in \mathcal{M}$: 
1) Calculate $\chi_i$ in problem (\ref{eq:sub_phi_cell_m}), obtain $\theta_i^\star$, $\forall i \in \mathcal{L}_m$ as in Step 2 and quantize $\theta_i^\star$ by $\vartheta_i^\sharp = \lceil \frac{\vartheta_i^\star}{\triangle_B} \rfloor \times \triangle_B$, $\forall i \in \mathcal{L}_m$. 
2) Obtain the continuous solution ${\bm \beta}_m^\star$ of problem (\ref{eq:amplitude_discrete}) as in Step 2 and quantize ${\bm \beta}_m^\star$ by $\beta_i^\sharp = \lceil \frac{\beta_i^\star}{\triangle_A} \rfloor \times \triangle_A$, $\forall i \in \mathcal{L}_m$. 
3) Execute Phases 1-3.
4) Update $\phi_i^\sharp = \sqrt{\beta_i^\sharp}e^{\jmath\vartheta_i^\sharp}$, $\forall i \in \mathcal{L}_m$.

\textit{Step 4:} Repeat Step 2 for continuous case or Step 3 for discrete case until the convergence of vectors ${\bm \phi}_l^\star$, $\forall l \in \mathcal{L}$ or ${\bm \phi}_l^\sharp$, $\forall l \in \mathcal{L}$ is achieved.

\section{Performance Evaluation}

In this section, we evaluate the performance of the multi-sector BD-RIS-assisted MU-MISO system. 
Channels from the transmitter to the RIS and from the multi-sector BD-RIS to users are modeled as the combination of the large-scale and the small-scale Rician fading as illustrated in Sec. II-B.
Antenna gains for transmitter antennas and user antennas are fixed to $G_\mathrm{T} = G_\mathrm{U} = 1$. 
The frequency of the transmit signal is set as $f = 2.4$ GHz. 
Distances between the transmitter and the multi-sector BD-RIS and between the multi-sector BD-RIS and users are $d_\mathrm{TI} = 100$ m and $d_{\mathrm{IU},k} = 10$ m, $\forall k \in \mathcal{K}$, respectively. 
The path loss components are set as $\epsilon_o = 2.5$ when the Rician factors are set as $\kappa_o = 0$ dB, and $\epsilon_o = 2$ when the Rician factors are set as $\kappa_o = 10$ dB, $\forall o\in\mathcal{O}$.
The noise power is $\sigma_{k}^2 = -80$ dBm, $\forall k \in \mathcal{K}$.
In the following simulations, we assume $K$ users are uniformly divided into $L$ groups, each of which contains $K/L$ users randomly located within the coverage of one sector.  
We also fix the total number of multi-sector BD-RIS antennas $ML$ for fair comparison, i.e., the antenna gain of each antenna increases with increasing $L$, while the dimension of each sector $M$ decreases with $L$.

\begin{figure}
    \centering
    \subfigure[$\kappa_o = 0$ dB, $\forall o\in\mathcal{O}$, $P_\mathrm{T} = 20$ dBm]{
    \label{fig:SR_Mtot_fixed_con_0}
    \includegraphics[width = 0.48\textwidth]{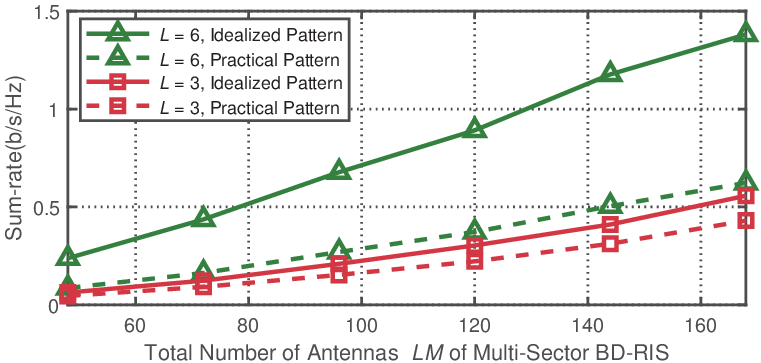}}
    \subfigure[$\kappa_o = 10$ dB, $\forall o\in\mathcal{O}$, $P_\mathrm{T} = 10$ dBm]{
    \label{fig:SR_Mtot_fixed_con_10}
    \includegraphics[width = 0.48\textwidth]{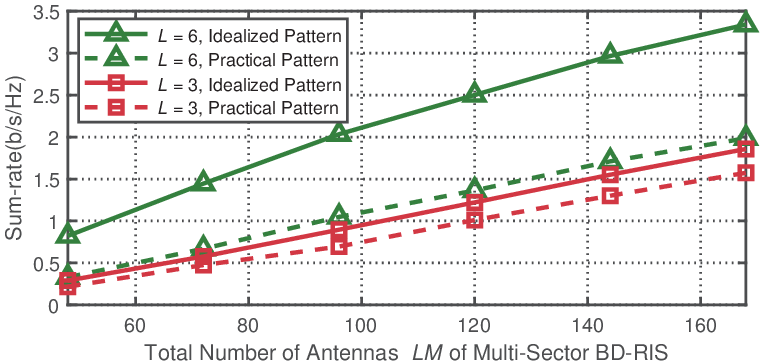}}
    \caption{Sum-rate versus total number of antennas of multi-sector BD-RIS ($N = K = 6$, $M_\mathrm{y} = 4$).}
    \label{fig:SR_Mtot_fixed_con}
\end{figure}

\begin{figure}
    \centering
    \subfigure[Idealized pattern]{
    \label{fig:SR_Mtot_idealized}
    \includegraphics[width = 0.48\textwidth]{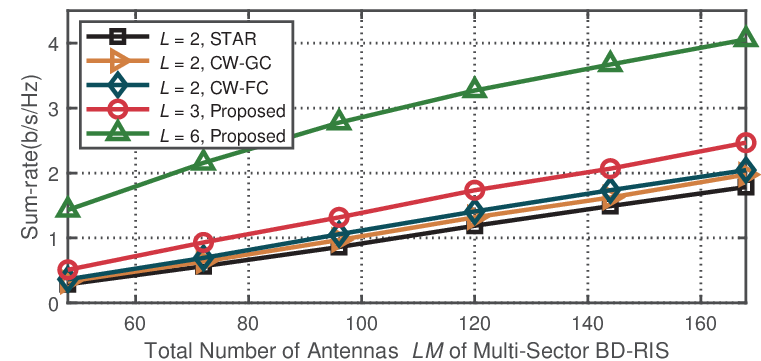}}
    \subfigure[Practical pattern]{
    \label{fig:SR_Mtot_practical}
    \includegraphics[width = 0.48\textwidth]{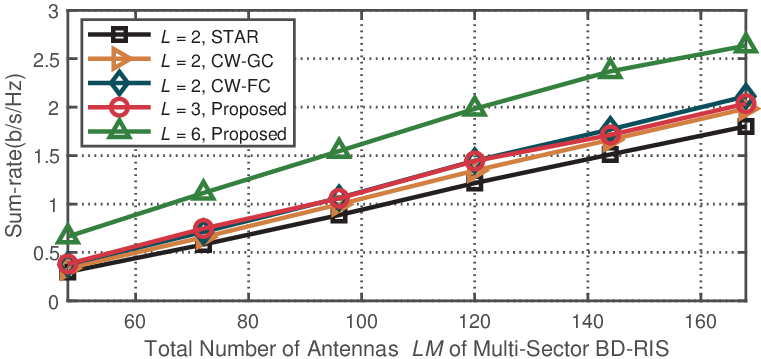}}
    \caption{Sum-rate versus total number of antennas of multi-sector BD-RIS ($N = K = 6$, $M_\mathrm{y} = 4$, $\kappa_o = 0$ dB, $\forall o\in\mathcal{O}$, $P_\mathrm{T} = 30$ dBm).}
    \label{fig:SR_Mtot_add_benchmark}
\end{figure}

\begin{figure*}
    \centering
    \subfigure[Idealized pattern ($\kappa_o = 0$ dB, $\forall o \in \mathcal{O}$, $L = 3$, $P_\mathrm{T} = 20$ dBm)]{
    \label{fig:SR_Mtot_dis_0_L3_iNPRP}
    \includegraphics[width = 0.48\textwidth]{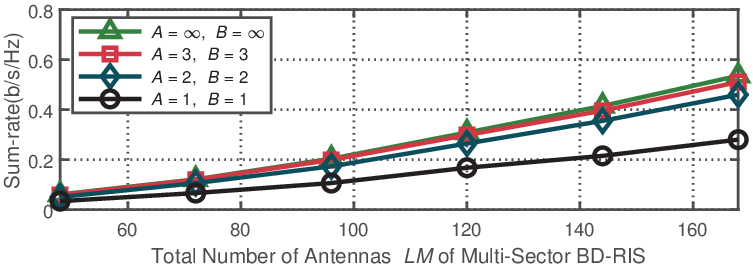}}
    \subfigure[Practical pattern ($\kappa_o = 0$ dB, $\forall o \in \mathcal{O}$, $L = 3$, $P_\mathrm{T} = 20$ dBm)]{
    \label{fig:SR_Mtot_dis_0_L3_pNPRP}
    \includegraphics[width = 0.48\textwidth]{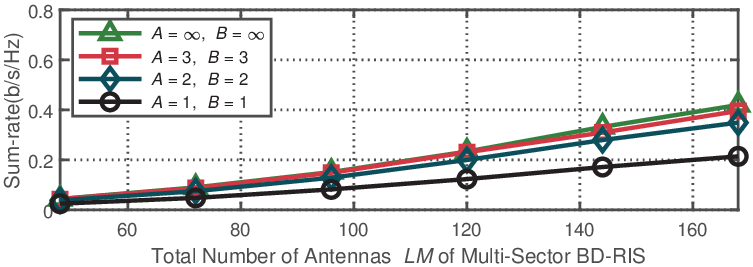}}
    \subfigure[Idealized pattern ($\kappa_o = 10$ dB, $\forall o \in \mathcal{O}$, $L = 3$, $P_\mathrm{T} = 10$ dBm)]{
    \label{fig:SR_Mtot_dis_10_L3_iNPRP}
    \includegraphics[width = 0.48\textwidth]{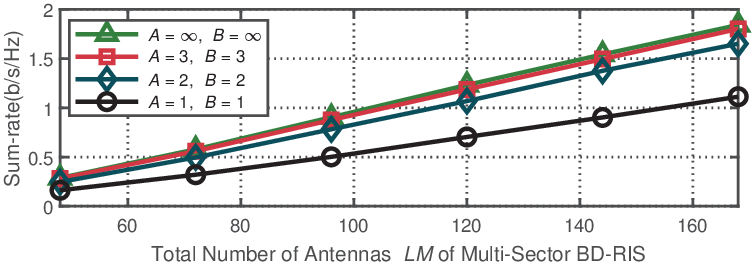}}
    \subfigure[Practical pattern ($\kappa_o = 10$ dB, $\forall o \in \mathcal{O}$, $L = 3$, $P_\mathrm{T} = 10$ dBm)]{
    \label{fig:SR_Mtot_dis_10_L3_pNPRP}
    \includegraphics[width = 0.48\textwidth]{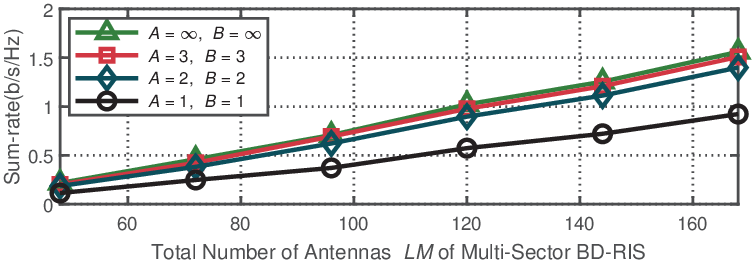}}   
    \subfigure[Idealized pattern ($\kappa_o = 0$ dB, $\forall o \in \mathcal{O}$, $L = 6$, $P_\mathrm{T} = 20$ dBm)]{
    \label{fig:SR_Mtot_dis_0_L6_iNPRP}
    \includegraphics[width = 0.48\textwidth]{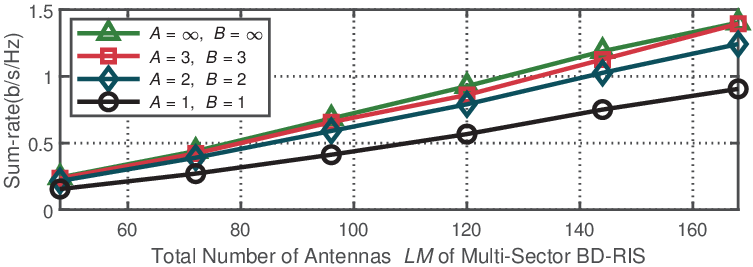}}
    \subfigure[Practical pattern ($\kappa_o = 0$ dB, $\forall o \in \mathcal{O}$, $L = 6$, $P_\mathrm{T} = 20$ dBm)]{
    \label{fig:SR_Mtot_dis_0_L6_pNPRP}
    \includegraphics[width = 0.48\textwidth]{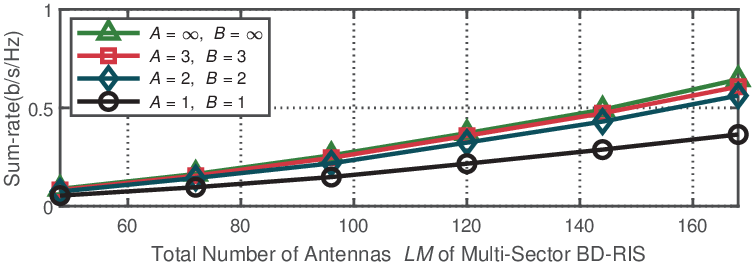}}
    \subfigure[Idealized pattern ($\kappa_o = 10$ dB, $\forall o \in \mathcal{O}$, $L = 6$, $P_\mathrm{T} = 10$ dBm)]{
    \label{fig:SR_Mtot_dis_10_L6_iNPRP}
    \includegraphics[width = 0.48\textwidth]{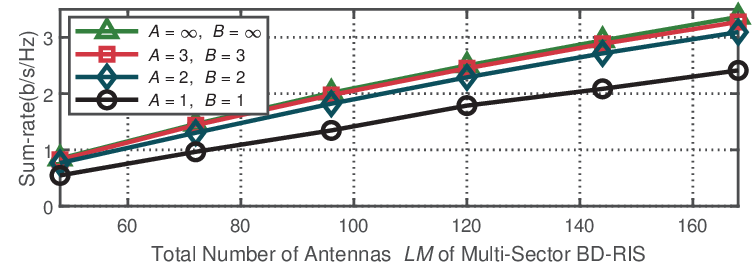}}
    \subfigure[Practical pattern ($\kappa_o = 10$ dB, $\forall o \in \mathcal{O}$, $L = 6$, $P_\mathrm{T} = 10$ dBm)]{
    \label{fig:SR_Mtot_dis_10_L6_pNPRP}
    \includegraphics[width = 0.48\textwidth]{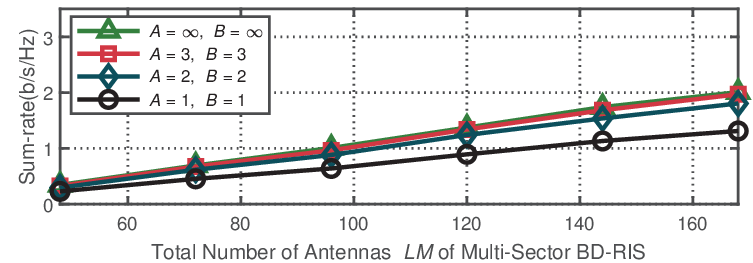}} 
    \caption{Sum-rate versus total number of antennas of multi-sector BD-RIS with different resolutions for amplitudes and phase shifts, different numbers of sectors, and different Rician fading factors ($N = K = 6$, $M_\mathrm{y} = 4$).}
    \label{fig:SR_Mtot_dis}\vspace{-0.2 cm}
\end{figure*}

Fig. \ref{fig:SR_Mtot_fixed_con} shows the sum-rate performance versus the total number of antennas of multi-sector BD-RIS. 
We consider different numbers of sectors and different Rician factors. 
We have the following observations from Fig. \ref{fig:SR_Mtot_fixed_con}.
1) Even with a reduced dimension for each sector, the sum-rate performance achieved by the multi-sector BD-RIS with an increasing number of sectors $L$ still enhances, which is in accordance with the conclusion in Fig. \ref{fig:P_L_Mtot_fixed}. This observation indicates that the higher antenna gain due to narrower beamwidth compensates for the reduced dimension of each sector, the reduced power split to each sector, and boosts the performance.
2) Given the same fading conditions, the multi-sector BD-RIS with a smaller number of antennas but a larger number of sectors can achieve similar performance to that with a larger number of antennas but a smaller number of sectors. 
For example, when the practical radiation pattern is adopted, the sum-rate performance with $ML = 120$ and $L = 3$ is similar to that with $ML = 96$ and $L = 6$ as shown in Fig. \ref{fig:SR_Mtot_fixed_con_0}. 
This phenomenon implies the number of antennas for the whole multi-sector BD-RIS can be reduced by increasing the number of sectors, which is meaningful for practical realizations of the multi-sector BD-RIS. 
3) With the fixed number of sectors, the sum-rate performance based on the idealized radiation pattern is always better than that based on the practical radiation pattern, which, again, matches with the result in Fig. \ref{fig:P_L_Mtot_fixed}.

In Fig. \ref{fig:SR_Mtot_add_benchmark}, we compare the sum-rate performance with existing RIS architectures. Specifically, STAR-RIS is essentially a special case of $L=2$ of our proposed model. We also add the performance of BD-RIS with $L=2$ and cell-wise group/fully-connected (CW-GC/FC) architectures proposed in \cite{li2022}, where the group size of CW-GC is fixed as 6, indicating that the CW-GC scheme and the proposed model with $L=6$ have the same circuit complexity. From Fig. \ref{fig:SR_Mtot_add_benchmark} we observe that when the idealized radiation pattern is adopted, the proposed multi-sector BD-RIS outperforms both STAR and CW-GC/FC schemes; when the practical radiation pattern is adopted, the proposed multi-sector BD-RIS outperforms STAR and CW-GC schemes, and achieves almost the same performance as the CW-FC scheme. More importantly, with the same circuit complexity of the $ML$-port reconfigurable impedance network, the proposed model (with $L=6$) can achieve much better sum-rate than the CW-GC scheme (with $L=2$). This observation highlights the benefit of antenna arrangements in boosting performance.

In Fig. \ref{fig:SR_Mtot_dis} we plot the sum-rate performance as a function of the total number of antennas of multi-sector BD-RIS when each nonzero element of the multi-sector BD-RIS has discrete amplitudes and phase shifts controlled by different resolutions. Similar conclusions as Fig. \ref{fig:SR_Mtot_fixed_con} can be obtained from Fig. \ref{fig:SR_Mtot_dis}. Moreover, we have the following additional observations. 
1) The sum-rate performance achieved by multi-sector BD-RIS with discrete values increases with the growth of resolutions. Specifically, when the idealized radiation pattern is applied for the modeling of fading channels, the sum-rate with $A = 3$ and $B = 3$ is close to that for the continuous case; when the practical radiation pattern is applied for the channel model, the sum-rate with $A = 2$ and $B = 2$ is sufficient to achieve satisfactory performance. 
2) When the total number of antennas of multi-sector BD-RIS is fixed, the performance gap between multi-sector BD-RIS with different resolutions become smaller with the growth of the number of sectors $L$.  
Recalling that the sum-rate performance increases with $L$, we can deploy the multi-sector BD-RIS with a larger number of sectors and lower resolutions to achieve the same performance as that with a smaller number of sectors and higher resolution. 
For example, the sum-rate performance with $L = 3$, $A = B = 2$, and $M = 40$ in Fig. \ref{fig:SR_Mtot_dis_0_L3_pNPRP} is almost the same as that with $L = 6$, $A = B = 1$, and $M = 20$ in Fig. \ref{fig:SR_Mtot_dis_0_L6_pNPRP}.
Such reduction of resolutions is beneficial for reducing the cost of multi-sector BD-RIS realizations.

\section{Conclusion and Future Work}
\label{sc:Conclusion}

In this paper, we propose a novel multi-sector BD-RIS model. Particularly, we derive the constraints of the cell-wise single-connected multi-sector BD-RIS for both continuous and discrete cases. We also derive the multi-sector BD-RIS-aided channel model including the impact of antenna gain. 
The existing STAR-RIS/IOS is a special instance of our proposed multi-sector BD-RIS model. 

With the proposed model, we derive the scaling law of the received power as a function of the number of sectors for the multi-sector BD-RIS-aided SU-SISO system. We show that there is a positive correlation between the maximum received power and the number of sectors. 

We then apply the multi-sector BD-RIS in the MU-MISO system and propose simple yet efficient algorithms to maximize the sum-rate of the multi-sector BD-RIS-aided MU-MISO system when amplitudes/phase shifts of the RIS have either continuous values or discrete values selected from finite codebooks. 
Simulation results show that given the same sum-rate requirement, the number of antennas for the whole multi-sector BD-RIS can be reduced by increasing the number of sectors, and that the codebook resolutions can also be reduced by increasing the number of sectors. 

Future research avenues include, but are not limited to the following aspects:

\subsubsection{The Dimensioning of Multi-Sector BD-RIS} The size and the number of antenna elements, and the number of sectors of multi-sector BD-RIS are jointly determined by the operating frequency of incident waves, the performance requirement, and circuit complexity/cost. Therefore, it is important to carefully determine the dimension of multi-sector BD-RIS to balance the performance and hardware complexity.

\subsubsection{Different Multi-Sector BD-RIS Design Approaches} While the sum-rate performance enhancement of the proposed multi-sector BD-RIS has been evaluated in this work, the benefit of multi-sector BD-RIS could also be investigated from various perspectives by considering different algorithms and different metrics, such as circuit complexity minimization and power minimization.

\begin{appendices}

\section{Proof of Lemma 1}
By applying the Lagrange method, we can easily obtain $(\mathbf{\Psi}_m + \mu\mathbf{I}_L){\bm \beta}_m = {\bm \chi}_m$ with ${\bm \beta}_m^T{\bm \beta}_m = 1$, where $\mu \in \mathbb{R}$ is the Lagrange multiplier. 
Now it remains to show $\mathbf{\Psi}_m + \mu\mathbf{I}_L$ is positive semi-definite. 
Suppose ${\bm \beta}_m'$ is a feasible solution of problem (\ref{eq:amplitude_cell_m_continuous1}) with 
${\bm \beta}_m'^T{\bm \beta}_m' = 1$. Then we have  $f({\bm \beta}_m') \ge f({\bm\beta}_m^{\star})$. Given that $(\mathbf{\Psi}_m + \mu\mathbf{I}_L){\bm \beta}_m^\star = {\bm \chi}_m$, we have
\begin{equation}
    \label{eq:inequ1}
    \begin{aligned}
        &{\bm \beta}_m'^T\mathbf{\Psi}_m{\bm \beta}_m' - 2{\bm \beta}_m^{\star T}(\mathbf{\Psi}_m + \mu\mathbf{I}_L){\bm \beta}_m'\\
        &~~~ \ge  {\bm \beta}_m^{\star T}\mathbf{\Psi}_m{\bm \beta}_m^\star - 2{\bm \beta}_m^{\star T}(\mathbf{\Psi}_m + \mu\mathbf{I}_L){\bm \beta}_m^\star.
    \end{aligned}
\end{equation}
Rearranging (\ref{eq:inequ1}), we have
\begin{equation}
    \label{eq:inequ2}
    \begin{aligned}
        &({\bm \beta}_m'^T - {\bm \beta}_m^{\star T})(\mathbf{\Psi}_m + \mu\mathbf{I}_L)({\bm \beta}_m'-{\bm\beta}_m^\star) \\
        &~~~~~~ \ge \mu({\bm \beta}_m'^T{\bm \beta}_m' - {\bm \beta}_m^{\star T}{\bm\beta}_m^\star) = 0,
    \end{aligned}
\end{equation}
which completes the proof.

\section{Proof of Lemma 2}
With $\varepsilon_{m,l} + \mu \ge \varepsilon_{m,1} + \mu> 0$, we have 
\begin{equation}
    \begin{aligned}
        1 = &g(\mu) = \sum_{l\in \mathcal{L}}\frac{\tilde{\chi}_{m,l}^2}{(\varepsilon_{m,l} + \mu)^2}\\
         &\le \sum_{l\in \mathcal{L}}\frac{\tilde{\chi}_{m,l}^2}{(\varepsilon_{m,1} + \mu)^2} = \frac{\|{\bm \chi}_m\|_2^2}{(\varepsilon_{m,1} + \mu)^2}, 
    \end{aligned}
\end{equation}
which yields $\mu\le \|{\bm \chi}_m\|_2 - \varepsilon_{m,1}$. Meanwhile, we have 
\begin{equation}
    \begin{aligned}
        1 = &g(\mu) = \sum_{l\in \mathcal{L}}\frac{\tilde{\chi}_{m,l}^2}{(\varepsilon_{m,l} + \mu)^2}\\
         &\ge \sum_{l\in \mathcal{E}}\frac{\tilde{\chi}_{m,l}^2}{(\varepsilon_{m,l} + \mu)^2} = \frac{\sum_{\forall l \in \mathcal{E}}\tilde{\chi}_{m,l}^2}{(\varepsilon_{m,1} + \mu)^2}, 
    \end{aligned}
\end{equation}  
and thus $\mu \ge \sqrt{\sum_{\forall l \in \mathcal{E}}\tilde{\chi}_{m,l}^2} - \varepsilon_{m,1}$.

\end{appendices}

\bibliographystyle{IEEEtran}
\bibliography{references}

\begin{thebibliography}{10}
\providecommand{\url}[1]{#1}
\csname url@samestyle\endcsname
\providecommand{\newblock}{\relax}
\providecommand{\bibinfo}[2]{#2}
\providecommand{\BIBentrySTDinterwordspacing}{\spaceskip=0pt\relax}
\providecommand{\BIBentryALTinterwordstretchfactor}{4}
\providecommand{\BIBentryALTinterwordspacing}{\spaceskip=\fontdimen2\font plus
\BIBentryALTinterwordstretchfactor\fontdimen3\font minus
  \fontdimen4\font\relax}
\providecommand{\BIBforeignlanguage}[2]{{%
\expandafter\ifx\csname l@#1\endcsname\relax
\typeout{** WARNING: IEEEtran.bst: No hyphenation pattern has been}%
\typeout{** loaded for the language `#1'. Using the pattern for}%
\typeout{** the default language instead.}%
\else
\language=\csname l@#1\endcsname
\fi
#2}}
\providecommand{\BIBdecl}{\relax}
\BIBdecl

\bibitem{tataria20216g}
H.~Tataria, M.~Shafi, A.~F. Molisch, M.~Dohler, H.~Sj{\"o}land, and
  F.~Tufvesson, ``{6G} wireless systems: Vision, requirements, challenges,
  insights, and opportunities,'' \emph{Proceedings of the IEEE}, vol. 109,
  no.~7, pp. 1166--1199, 2021.

\bibitem{saad2019vision}
W.~Saad, M.~Bennis, and M.~Chen, ``A vision of {6G} wireless systems:
  Applications, trends, technologies, and open research problems,'' \emph{IEEE
  network}, vol.~34, no.~3, pp. 134--142, 2019.

\bibitem{di2020smart}
M.~Di~Renzo, A.~Zappone, M.~Debbah, M.-S. Alouini, C.~Yuen, J.~De~Rosny, and
  S.~Tretyakov, ``Smart radio environments empowered by reconfigurable
  intelligent surfaces: How it works, state of research, and the road ahead,''
  \emph{IEEE Journal on Selected Areas in Communications}, vol.~38, no.~11, pp.
  2450--2525, 2020.

\bibitem{di2019smart}
M.~Di~Renzo, M.~Debbah, D.-T. Phan-Huy, A.~Zappone, M.-S. Alouini, C.~Yuen,
  V.~Sciancalepore, G.~C. Alexandropoulos, J.~Hoydis, H.~Gacanin \emph{et~al.},
  ``Smart radio environments empowered by reconfigurable {AI} meta-surfaces: An
  idea whose time has come,'' \emph{EURASIP Journal on Wireless Communications
  and Networking}, vol. 2019, no.~1, pp. 1--20, 2019.

\bibitem{guo2020weighted}
H.~Guo, Y.-C. Liang, J.~Chen, and E.~G. Larsson, ``Weighted sum-rate
  maximization for reconfigurable intelligent surface aided wireless
  networks,'' \emph{IEEE Transactions on Wireless Communications}, vol.~19,
  no.~5, pp. 3064--3076, 2020.

\bibitem{di2020hybrid}
B.~Di, H.~Zhang, L.~Song, Y.~Li, Z.~Han, and H.~V. Poor, ``Hybrid beamforming
  for reconfigurable intelligent surface based multi-user communications:
  Achievable rates with limited discrete phase shifts,'' \emph{IEEE Journal on
  Selected Areas in Communications}, vol.~38, no.~8, pp. 1809--1822, 2020.

\bibitem{huang2019reconfigurable}
C.~Huang, A.~Zappone, G.~C. Alexandropoulos, M.~Debbah, and C.~Yuen,
  ``Reconfigurable intelligent surfaces for energy efficiency in wireless
  communication,'' \emph{IEEE Transactions on Wireless Communications},
  vol.~18, no.~8, pp. 4157--4170, 2019.

\bibitem{wu2019intelligent}
Q.~Wu and R.~Zhang, ``Intelligent reflecting surface enhanced wireless network
  via joint active and passive beamforming,'' \emph{IEEE Transactions on
  Wireless Communications}, vol.~18, no.~11, pp. 5394--5409, 2019.

\bibitem{wu2019beamforming}
------, ``Beamforming optimization for wireless network aided by intelligent
  reflecting surface with discrete phase shifts,'' \emph{IEEE Transactions on
  Communications}, vol.~68, no.~3, pp. 1838--1851, 2019.

\bibitem{zheng2021double}
B.~Zheng, C.~You, and R.~Zhang, ``Double-{IRS} assisted multi-user mimo:
  Cooperative passive beamforming design,'' \emph{IEEE Transactions on Wireless
  Communications}, vol.~20, no.~7, pp. 4513--4526, 2021.

\bibitem{bansal2021rate}
A.~Bansal, K.~Singh, B.~Clerckx, C.-P. Li, and M.-S. Alouini, ``Rate-splitting
  multiple access for intelligent reflecting surface aided multi-user
  communications,'' \emph{IEEE Transactions on Vehicular Technology}, vol.~70,
  no.~9, pp. 9217--9229, 2021.

\bibitem{feng2022waveform}
Z.~Feng, B.~Clerckx, and Y.~Zhao, ``Waveform and beamforming design for
  intelligent reflecting surface aided wireless power transfer: Single-user and
  multi-user solutions,'' \emph{IEEE Transactions on Wireless Communications},
  2022.

\bibitem{zhao2021irs}
Y.~Zhao, B.~Clerckx, and Z.~Feng, ``{IRS-aided SWIPT}: Joint waveform, active
  and passive beamforming design under nonlinear harvester model,'' \emph{IEEE
  Transactions on Communications}, vol.~70, no.~2, pp. 1345--1359, 2021.

\bibitem{liu2022joint}
R.~Liu, M.~Li, Y.~Liu, Q.~Wu, and Q.~Liu, ``Joint transmit waveform and passive
  beamforming design for {RIS-aided DFRC} systems,'' \emph{IEEE Journal of
  Selected Topics in Signal Processing}, 2022.

\bibitem{xu2020resource}
D.~Xu, X.~Yu, Y.~Sun, D.~W.~K. Ng, and R.~Schober, ``Resource allocation for
  {IRS}-assisted full-duplex cognitive radio systems,'' \emph{IEEE Transactions
  on Communications}, vol.~68, no.~12, pp. 7376--7394, 2020.

\bibitem{gopi2020intelligent}
S.~Gopi, S.~Kalyani, and L.~Hanzo, ``Intelligent reflecting surface assisted
  beam index-modulation for millimeter wave communication,'' \emph{IEEE
  Transactions on Wireless Communications}, vol.~20, no.~2, pp. 983--996, 2020.

\bibitem{abeywickrama2020intelligent}
S.~Abeywickrama, R.~Zhang, Q.~Wu, and C.~Yuen, ``Intelligent reflecting
  surface: Practical phase shift model and beamforming optimization,''
  \emph{IEEE Transactions on Communications}, vol.~68, no.~9, pp. 5849--5863,
  2020.

\bibitem{li2021intelligent}
H.~Li, W.~Cai, Y.~Liu, M.~Li, Q.~Liu, and Q.~Wu, ``Intelligent reflecting
  surface enhanced wideband {MIMO-OFDM} communications: From practical model to
  reflection optimization,'' \emph{IEEE Transactions on Communications},
  vol.~69, no.~7, pp. 4807--4820, 2021.

\bibitem{ozdogan2019intelligent}
{\"O}.~{\"O}zdogan, E.~Bj{\"o}rnson, and E.~G. Larsson, ``Intelligent
  reflecting surfaces: Physics, propagation, and pathloss modeling,''
  \emph{IEEE Wireless Communications Letters}, vol.~9, no.~5, pp. 581--585,
  2019.

\bibitem{tang2020wireless}
W.~Tang, M.~Z. Chen, X.~Chen, J.~Y. Dai, Y.~Han, M.~Di~Renzo, Y.~Zeng, S.~Jin,
  Q.~Cheng, and T.~J. Cui, ``Wireless communications with reconfigurable
  intelligent surface: Path loss modeling and experimental measurement,''
  \emph{IEEE Transactions on Wireless Communications}, vol.~20, no.~1, pp.
  421--439, 2020.

\bibitem{shen2021}
S.~Shen, B.~Clerckx, and R.~Murch, ``Modeling and architecture design of
  reconfigurable intelligent surfaces using scattering parameter network
  analysis,'' \emph{IEEE Transactions on Wireless Communications}, vol.~21,
  no.~2, pp. 1229--1243, 2021.

\bibitem{nerini2021reconfigurable}
M.~Nerini, S.~Shen, and B.~Clerckx, ``Discrete-value group and fully connected
  architectures for beyond diagonal reconfigurable intelligent surfaces,''
  \emph{arXiv:2110.00077v2}, 2021.

\bibitem{li2022reconfigurable}
Q.~Li, M.~El-Hajjar, I.~A. Hemadeh, A.~Shojaeifard, A.~Mourad, B.~Clerckx, and
  L.~Hanzo, ``Reconfigurable intelligent surfaces relying on non-diagonal phase
  shift matrices,'' \emph{IEEE Transactions on Vehicular Technology}, 2022.

\bibitem{xu2021star}
J.~Xu, Y.~Liu, X.~Mu, and O.~A. Dobre, ``{STAR-RISs}: Simultaneous transmitting
  and reflecting reconfigurable intelligent surfaces,'' \emph{IEEE
  Communications Letters}, vol.~25, no.~9, pp. 3134--3138, 2021.

\bibitem{zhang2022intelligent}
H.~Zhang, S.~Zeng, B.~Di, Y.~Tan, M.~Di~Renzo, M.~Debbah, Z.~Han, H.~V. Poor,
  and L.~Song, ``Intelligent omni-surfaces for full-dimensional wireless
  communications: Principles, technology, and implementation,'' \emph{IEEE
  Communications Magazine}, vol.~60, no.~2, pp. 39--45, 2022.

\bibitem{li2022}
H.~Li, S.~Shen, and B.~Clerckx, ``Beyond diagonal reconfigurable intelligent
  surfaces: From transmitting and reflecting modes to single-, group-, and
  fully-connected architectures,'' \emph{IEEE Transactions on Wireless
  Communications}, 2022.

\bibitem{balanis2015}
C.~A. Balanis, \emph{Antenna theory: analysis and design}.\hskip 1em plus 0.5em
  minus 0.4em\relax John wiley \& sons, 2015.

\bibitem{pozar2011}
D.~M. Pozar, \emph{Microwave engineering}.\hskip 1em plus 0.5em minus
  0.4em\relax John wiley \& sons, 2011.

\bibitem{munoz2009position}
D.~Munoz, F.~B. Lara, C.~Vargas, and R.~Enriquez-Caldera, \emph{Position
  location techniques and applications}.\hskip 1em plus 0.5em minus 0.4em\relax
  Academic Press, 2009.

\bibitem{akyildiz2022terahertz}
I.~F. Akyildiz, C.~Han, Z.~Hu, S.~Nie, and J.~M. Jornet, ``Terahertz band
  communication: An old problem revisited and research directions for the next
  decade,'' \emph{IEEE Transactions on Communications}, 2022.

\bibitem{ngo2017cell}
H.~Q. Ngo, A.~Ashikhmin, H.~Yang, E.~G. Larsson, and T.~L. Marzetta,
  ``Cell-free massive {MIMO} versus small cells,'' \emph{IEEE Transactions on
  Wireless Communications}, vol.~16, no.~3, pp. 1834--1850, 2017.

\bibitem{le2021energy}
Q.~N. Le, V.-D. Nguyen, O.~A. Dobre, and R.~Zhao, ``Energy efficiency
  maximization in {RIS}-aided cell-free network with limited backhaul,''
  \emph{IEEE Communications Letters}, vol.~25, no.~6, pp. 1974--1978, 2021.

\bibitem{van2021ris}
T.~Van~Chien, H.~Q. Ngo, S.~Chatzinotas, M.~Di~Renzo, and B.~Ottersten, ``{RIS}
  and cell-free massive {MIMO}: A marriage for harsh propagation
  environments,'' in \emph{2021 IEEE Global Communications Conference
  (GLOBECOM)}.\hskip 1em plus 0.5em minus 0.4em\relax IEEE, 2021, pp. 01--06.

\bibitem{FP2018}
K.~Shen and W.~Yu, ``Fractional programming for communication systems—{P}art
  {I}: Power control and beamforming,'' \emph{IEEE Transactions on Signal
  Processing}, vol.~66, no.~10, pp. 2616--2630, 2018.

\bibitem{hager2001}
W.~W. Hager, ``Minimizing a quadratic over a sphere,'' \emph{SIAM Journal on
  Optimization}, vol.~12, no.~1, pp. 188--208, 2001.

\end{thebibliography}

\end{document}